\theoremstyle{plain}
\newtheorem{definition}{Definition}
\newtheorem{lemma}{Lemma}
\newtheorem{remark}{Remark}
\newtheorem{theorem}{Theorem}
\numberwithin{equation}{section}
\begin{document}

\title[EMpirical characterization for the extremes]{How many statistics are needed to characterize the univariate extremes}
\author[G. S. LO]{Gane Samb LO}
\address{LSTA, UPMC, France and LERSTAD, Universit\'e Gaston Berger
de Saint-Louis, SENEGAL}
\email{gane-samb.lo@ugb.edu.sn, ganesamblo@ufrsat.org}
\urladdr{www.lsta.upmc.fr/gslo}

\keywords{Extreme value theory, order statistics, empirical distribution function,
domain of attraction of the maximum, functional and empirical characterization.\\ \\
This paper is part of the Doctorate of Sciences of the author, Dakar University 1991, under the title :
\textit{Empirical characterization of the extremes : A family of characterizing statistics}.}

\begin{abstract}
\Large
Let $X_{1},X_{2},...$ be a sequence of independent random variables ($rv$) with common distribution function ($df$) $F$ such that $F(1)=0$. We consider the simple statistical problem : find a statistics family of size $m\geq 1$ whose convergence, in probability or almost surely, to a point of some domain $\mathcal{S} \in \mathbb{R}^{m}$ is equivalent that $F$ lies in the extremal domain of attraction $\Gamma$. Such a family, whenever it exists, is called an Empirical Characterizing Statistics Family for the EXTtremes (ECSFEXT). The departure point of this theory goes back to Mason \cite{mason} who proved that the Hill (\cite{hill}) estimator converges a.s. to a positive real number for some particular sequences if and only $F$ lies in the attaction domain of a Fréchet's law. Considered for the whole attraction domain, the question becomes more complex. We provide here an ECSFEXT of nine (9) elements and also characterize the subdomains of $\Gamma$. The question of lowering m=9 to a minimum number is launched. 
\end{abstract}

\maketitle

\section{Introduction and statement of the main problem}
Let $X_{1},X_{2},...$ be a sequence of independent and identically associated with the
$df$ $F$, with $F(1)=0$ and let for once $G(y)=F(e^{x})$ an auxilliary
$df$ associated with independent and identically distributed
random variables $\log X_{1},$ $\log X_{2},....$ For each $n\geq 1$ fixed,
their order statistics are denoted by

\begin{equation*}
X_{1,n}=\log Y_{1,n}\leq X_{2,n}=\log Y_{2,n}\leq ...\leq X_{n,n}=\log Y_{n,n}.
\end{equation*}

\noindent The departure problem of Univariate Extreme Value Theory (UEVT) is finding the asymptotic law of the maximum observation
$X_{n,n}=\max \left( X_{1},...X_{n}\right)$. In this theory, the $df$ $F$ is said to be attracted to a non degenerated extremal $df$ $M$
iff the maximum $X_{n,n}=\max \left( X_{1},...X_{n}\right)$, when
appropriately centered and normalized \ by two sequences of real numbers $%
\left( a_{n}>0\right) _{n\geq 0}$ and $\left( b_{n}\right) _{n\geq 0}$,
converges to $M$, in the sense that

\begin{equation}
\lim_{n\rightarrow +\infty }\mathbb{P}\left( X_{n,n}\leq a_{n}\text{ }%
x+b_{n}\right) =\lim_{n\rightarrow +\infty }F^{n}\left( a_{n}x+b_{n}\right)
=M(x),  \label{dl05}
\end{equation}
for continuity points $x$ of $M$. If (\ref{dl05}) holds, it is said that $F$
is attracted to $M$ or $F$ belongs to the domain of attraction of $M$,
written $F\in D(M)$. It is well-kwown that the three nondegenerate possible
limits in (\ref{dl05}), called extremal $df$'s, correspond to three possibles are the following. \newline

\noindent The Gumbel $df$ 
\begin{equation}
\Lambda (x)=\exp (-\exp (-x)),\text{ }x\in \mathbb{R},  \label{dl05a}
\end{equation}
the Fr\'{e}chet $df$ with parameter $\gamma >0$

\begin{equation}
\phi _{\gamma }(x)=\exp (-x^{-\gamma })\mathbb{I}_{\left[ 0,+\infty \right[
}(x),\text{ }x\in \mathbb{R}\   \label{dl05b}
\end{equation}
and the Weiblull $df$ with parameter $\beta >0$

\begin{equation}
\psi _{\beta}(x)=\exp (-(x)^{\beta})\mathbb{I}_{\left] -\infty ,0\right]
}(x)+(1-1_{\left] -\infty ,0\right] }(x)),\ \ x\in \mathbb{R},\ 
\label{dl05c}
\end{equation}
where $\mathbb{I}_{A}$ denotes the indicator function of the set A.\\

\noindent Actually the limiting $df$ $M$ is defined by an equivalence class
of the binary relation $\mathcal{R}$ on the set $\mathcal{D}$ of $%
cdf^{\prime}s$ on $\mathbb{R}$, defined as follows 
\begin{equation*}
\forall (M_{1},M_{2})\in \mathcal{D}^{2},(M_{1}\text{ }\mathcal{R}\text{ }%
M_{2})\Leftrightarrow \exists (a,b)\in \mathbb{R}_{+}\backslash \{0\}\times 
\mathbb{R},\forall (x\in \mathbb{R}) :
\end{equation*}
\begin{equation*}
M_{2}(x)=M_{1}(ax+b).
\end{equation*}
One easily checks that if $F^{n}\left( a_{n}x+b_{n}\right) \rightarrow
M_{1}(x),$ then $F^{n}\left( c_{n}x+d_{n}\right) \rightarrow
M_{1}(ax+b)=M_{2}(x)$ whenever 
\begin{equation}
a_{n}/d_{n}\rightarrow a\text{ and }(b_{n}-d_{n})/c_{n}\rightarrow b\text{
as }n\rightarrow \infty .  \label{dl05f}
\end{equation}

\noindent Theses facts allow to parameterize the class of extremal $df$'s.
For this purpose, suppose that (\ref{dl05}) holds for the three $df$'s
given in (\ref{dl05a}), (\ref{dl05b}) and (\ref{dl05c}). If we take
sequences $(c_{n}>0)_{n\geq 1}$ and $(d_{n})_{n\geq 1}$ such that the limits
in (\ref{dl05f}) are $a=\gamma=1/\alpha$ and $b=1$ (in the case of Fr\'{e}%
chet extremal domain), and $a=-\beta=-1/\alpha$ and $b=-1$ (in the case of
Weibull extremal domain), and finally, if we interpret $(1+\gamma
x)^{-1/\gamma}$ as $exp(-x)$ for $\gamma =0$ (in the case of Gumbel extremal
domain), we are entitled to write the three extremal $df$'s in the
parameterized shape 
\begin{equation}
G_{\gamma }(x)=\exp (-(1+\gamma x)^{-1/\gamma }),\text{ }1+\gamma x\geq 0,
\label{dl05d}
\end{equation}
called the Generalized Extreme Value (GEV) distribution function with parameter $\gamma \in 
\mathbb{R}$.\\

\noindent For a for a modern and large account of the Extreme Value Theory, the reader is referred to Beirlant \textit{et al.} \cite{bgt}, Galambos \cite{galambos}, de Haan \cite{dehaan}, de Haan and Ferreira \cite{dehaan1} and Resnick \cite{resnick}.\\

\noindent The problem of estimating the extremal index $\gamma$ by various and numerous estimators and finding statistical tests based on those estimators has been extremely widely tackled by many authors in papers and books. Let us only cite a sample of these authors as : Pickands \cite{pickands}, Hall
(1981) \cite{hallf}, Berilant and Teugels (1986) \cite{berf}, Deheuvels and Mason (1990) \cite{dmf1}, Deheuvels and Mason (1990) \cite{dmf3}, Deheuvels,
Haeusler and Mason (1988) \cite{dhmf}, Cs\"{o}rg\H{o}, Haeusler and Mason \cite{chm} and L\^{o} \cite{gsloa}, \cite{gslob} etc. Even in these last years new statistics continue to appear in the frame of new methodologies such as adaptative procedures and second and third order condition, etc.\\

\noindent This paper is not only about statistical estimation of the extremal domain, in the sense that the convergence of some statistics $S_{n}$ to a function of the extremal index $g(\gamma)$, under the hypothesis (H) that $F$ lies in $\Gamma$, yields a statistical test of (H) with $\left(|S_{n}-g(\gamma))| > c\right)$ as a rejection region. We also face the inverse question : does the convergence of $S_{n}$ to $g(\gamma)$ implies that (H) holds. This is the empirical characterization problem that we set and motivate in the next Section \ref{sec2}. In Section \ref{sec3}, we give a general solution proved in Section \ref{sec4}. Concluding remarks end the paper in Section \ref{sec5}.

\section{The problem and its motivation} \label{sec2}

We are now describing the Mason fundamental result which is the departure point of our question. Consider
a sequence of integers $k=k_{n}$, $n\geq 1$ satisfying,%
\begin{equation}
k_{n}\rightarrow \infty \text{ and }k_{n}/n\rightarrow 0\text{ as }%
n\rightarrow \infty  \tag{K}
\end{equation}%
and consider for $0<\alpha <1,$ $k_{n}(\alpha )=[n^{\alpha }]$, where $[x]$
denotes the integer part of $x$, that is the unique integer such that $%
[x]\leq x<[x]+1$, special cases of sequences satisfying $(K)$. Mason \cite{mason} (1982)
proved the following.

\bigskip

\begin{theorem} \label{theo21} For any $0<\gamma <\infty ,$ and $\ell =1$, $F\in D(\phi _{\gamma })$ if and only if
\begin{itemize}

\item [(i)] for some $0<\alpha <1$, 
\begin{equation*}
T_{n}(2,k,\ell )=k^{-1}\sum_{i=\ell }^{i=k}i(\log X_{n-i+1,n}-\log
X_{n-i,n})\rightarrow 1/\gamma ,p.s.
\end{equation*}

\noindent as $n\rightarrow +\infty $
\item [(ii)] if and only if for all sequences satisfying $(K),$%
\begin{equation*}
T_{n}(2,k,\ell ) \rightarrow _{P}1/\gamma
\end{equation*}

\noindent as $n\rightarrow +\infty$.
\end{itemize}
\end{theorem}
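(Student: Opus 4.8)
The plan is to reduce $T_{n}(2,k,1)$ to the classical Hill statistic and then analyze the latter through the quantile transform and the R\'enyi representation of uniform order statistics. Writing $a_{i}=\log X_{n-i+1,n}$, Abel summation gives
$$\sum_{i=1}^{k} i\,(a_{i}-a_{i+1}) = \sum_{i=1}^{k} a_{i} - k\,a_{k+1},$$
so that
$$T_{n}(2,k,1) = \frac{1}{k}\sum_{i=1}^{k}\bigl(\log X_{n-i+1,n}-\log X_{n-k,n}\bigr),$$
which is exactly the Hill estimator of the sample $X_{1},\dots,X_{n}$. This recasting is the starting point.

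Next I would introduce the probabilistic representation. Let $U_{1},U_{2},\dots$ be i.i.d. uniform on $(0,1)$ with order statistics $U_{1,n}\le\cdots\le U_{n,n}$, and let $V(t)=\inf\{x:F(x)\ge 1-1/t\}$ be the tail quantile function; then $X_{n-i+1,n}\stackrel{d}{=}V\!\left(1/(1-U_{n-i+1,n})\right)$, and $F\in D(\phi_{\gamma})$ holds if and only if $V$ is regularly varying at $+\infty$ with index $1/\gamma$ (equivalently $1-F$ is regularly varying with index $-\gamma$). Using the R\'enyi representation, with $E_{1},E_{2},\dots$ i.i.d. standard exponential and $S_{j}=E_{1}+\cdots+E_{j}$, one has $1-U_{n-i+1,n}\stackrel{d}{=}S_{i}/S_{n+1}$, hence
$$\log X_{n-i+1,n}-\log X_{n-k,n}\stackrel{d}{=}\log V\!\left(\frac{S_{n+1}}{S_{i}}\right)-\log V\!\left(\frac{S_{n+1}}{S_{k+1}}\right).$$

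For the forward implications $F\in D(\phi_{\gamma})\Rightarrow\text{(i)},\text{(ii)}$, I would set $t=S_{n+1}/S_{k+1}$, note that $t\to+\infty$ (since $S_{j}/j\to 1$ a.s. and $k/n\to0$), and invoke the uniform convergence theorem for regular variation to replace each difference $\log V(t\,S_{k+1}/S_{i})-\log V(t)$ by $(1/\gamma)\log(S_{k+1}/S_{i})$, with a Potter-bound control of the error uniformly in $i\le k$ (the small-$i$, i.e. largest-order-statistic, terms being the delicate ones). Summing, $T_{n}$ is asymptotic to
$$\frac{1}{\gamma}\cdot\frac{1}{k}\sum_{i=1}^{k}\log\frac{S_{k+1}}{S_{i}},$$
and since $(S_{i}/S_{k+1})_{1\le i\le k}$ is distributed as the vector of uniform order statistics $(U_{i,k})_{1\le i\le k}$, whose negative logarithms are, ignoring order, i.i.d. standard exponential, the strong law gives $k^{-1}\sum_{i=1}^{k}\log(S_{k+1}/S_{i})\to 1$, whence the limit $1/\gamma$. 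This yields convergence in probability for every $(K)$-sequence (part (ii)); for the almost sure statement along $k_{n}=[n^{\alpha}]$ (part (i)), the polynomial growth of $k_{n}$ is used to control, almost surely and uniformly, both the exponential average and the slowly varying correction, via a blocking and monotonicity argument combined with the strong law.

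The converse, $\text{(i)}$ or $\text{(ii)}\Rightarrow F\in D(\phi_{\gamma})$, is the main obstacle. Since (i) implies convergence in probability along the $(K)$-sequence $[n^{\alpha}]$, it suffices to show that convergence in probability of $T_{n}$ along some $(K)$-sequence forces $V$ to be regularly varying with index $1/\gamma$. The difficulty is that $T_{n}$ is only an averaged (integrated) functional of the log-ratios $\log\bigl(V(t/s)/V(t)\bigr)$, so its convergence a priori pins down a mean rather than the full ratio limit. I would exploit the monotonicity of $V$ together with the explicit exponential representation: comparing the statistic for two sequences $k_{n}$ and $k_{n}'$ with $k_{n}'/k_{n}\to c$ isolates the behaviour of $\log V(ct)-\log V(t)$, and monotonicity upgrades the averaged convergence to the pointwise limit $\log V(\lambda t)-\log V(t)\to(1/\gamma)\log\lambda$ for each $\lambda>0$, which is precisely regular variation of $V$ with index $1/\gamma$, hence $F\in D(\phi_{\gamma})$. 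Closing this step rigorously---ruling out an oscillation of the slowly varying part that could survive the averaging---is where the real care is needed.
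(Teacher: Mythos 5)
First, a point of reference: the paper does not prove Theorem \ref{theo21} at all; it is quoted from Mason \cite{mason}, and the machinery the paper builds around it (the first asymptotic moment $R(x,G)$ of (\ref{fm1}), Lemma \ref{lemmao1}(ii), and the approximation (S1), $T_{n}(2,k,\ell)=R(x_{n})(1+o(1))$) is precisely the skeleton of Mason's own argument. Your forward direction is a correct and standard alternative: the Abel summation identifying $T_{n}(2,k,1)$ with the Hill estimator is right, and the R\'enyi representation together with Potter bounds and the strong law for $k^{-1}\sum_{i\leq k}\log(S_{k+1}/S_{i})$ does give $F\in D(\phi_{\gamma})\Rightarrow$ (ii); the blocking argument needed for the almost sure statement (i) along $k_{n}=[n^{\alpha}]$ is real but routine work.

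The genuine gap is the converse, which is the entire content of the theorem, and you have left it open while sketching a strategy that does not fit the hypothesis. Under (i) you are given convergence along a single sequence $k_{n}=[n^{\alpha}]$, so you cannot ``compare the statistic for two sequences $k_{n}$ and $k_{n}'$ with $k_{n}'/k_{n}\to c$'': only one sequence is available, and monotonicity of $V$ alone will not rule out a slowly varying factor whose oscillations are killed by the averaging in $k^{-1}\sum_{i\leq k}\log\left(V(t/U_{i,k})/V(t)\right)$ --- which is exactly the obstruction you name without resolving. The missing idea is to trade the averaged random functional for a deterministic integrated-tail ratio: one shows $T_{n}(2,k,1)=R(x_{n},G)(1+o(1))$ a.s. with $x_{n}=\log X_{n-k,n}$ (the paper's statement (S1)); then that convergence of $R(\cdot,G)$ along the random levels $x_{n}$ forces $\lim_{x\to x_{0}(G)}R(x,G)=1/\gamma$, because $1-G(x_{n})\sim k/n$ sweeps out a sufficiently dense set of levels (compare the interpolation argument around (\ref{f3.31})--(\ref{f3.39a})); and finally one invokes the functional characterization $F\in D(\phi_{\gamma})\Leftrightarrow R(x,G)\to 1/\gamma$ (Lemma \ref{lemmao1}(ii), i.e. Mason's Lemma 1, a converse-Karamata statement). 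Without that characterization, or an equivalent converse-Karamata step, the implication from (i) to $F\in D(\phi_{\gamma})$ is not established by your outline.
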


\bigskip

\noindent This is the first step of what we call empirical characterizations of the
extremes achieved only with the Hill statistic $T_{n}(2,k,\ell )$. From this,
we formulate the following general problem.

\bigskip

\noindent Given only the observations $X_{1},X_{2},...$ associated with an unknow
underlying $df$ $F$, is it possible to answer these three
questions ? First

\bigskip

\noindent ($\mathcal{P}$) : Is it possible to find a set of statistics, that is a vector of $m \geq 1$
statistics $S_{n}=(S_{n}(1),...,S_{n}(m))$ and a subset $\mathcal{S}$ of $\mathbb{R}^{m}$ that such
the convergence of $S_{n}$ to a point of $\mathcal{S}$ is a necessary and sufficient
condition for $F$ to ly in the extremal domain $\Gamma$?

\newpage
\bigskip

\noindent This problem may be rephrased as follows : Is it possible to demonstrate the
existence of $(S_{n})$ and $\mathcal{S}$ such  that : 
\begin{equation}
(F\in \Gamma )\iff \exists (a\in \mathcal{S}),\text{ }(S_{n}\rightarrow a)  \label{c00}
\end{equation}%

\noindent where the limit is almost sure or in probability.\\

\bigskip

\noindent We denote this as a \textbf{global} empirical characterization of the extremal
domain. The statistics $S_{n},$ if it exists will be called an Empirical Characterizing Statistics Family for
the Extremes (ECSFEXT).\\

\noindent Let us introduce this notation. For $\mathcal{S}$ of $\mathbb{R}^{m}$, with $m \geq 1$, we call $\Pi$($\mathcal{S}$) the set all projections of $\mathcal{S}$ on its components.\\

\noindent If this question is positively answered, we go further and find to search to partition $\mathcal{S}$
into three subdomains $\mathcal{S}_{0}$, $\mathcal{S}_{1}$ and $\mathcal{S}_{2}$, the two latters being
paremeterized by $\gamma >0$, that this 
\begin{equation}
S_{1}=\{a(\gamma ),\gamma >0\},S_{2}=\{b(\gamma ),\gamma >0\}  \label{param}
\end{equation}
such that there exists $\pi \in \Pi(\mathcal{S})$ so that
\begin{equation}
(F\in \Lambda )\iff \exists (a\in S_{0}),\text{ }(\pi(S_{n})\rightarrow \pi(a)),
\label{c01}
\end{equation}%
for any $\gamma >0,$%
\begin{equation}
(F\in D(\phi ))\iff \left\{ \exists (\gamma >0),\text{ }(\pi(S_{n}) \rightarrow
\pi(a(\gamma)) \in S_{1})\text{ } \Rightarrow (F\in D(\phi _{\gamma })\right\}
\label{c02}
\end{equation}%
and for any%
\begin{equation}
(F\in D(\psi ))\iff \left\{ \exists (\gamma >0),\text{ }(\pi(S_{n})\rightarrow
\pi(b(\gamma))\in S_{2})\text{ } \Rightarrow (F\in D(\phi _{\gamma })\right\}.
\label{c03}
\end{equation}

\noindent When the empirical characterization concerns any particular
case (\ref{c01}), (\ref{c02}) or (\ref{c03}), we qualify it as \textbf{
simple}.\\

\noindent At this point, Mason \cite{mason} have solved the case (\ref{c02}) in a very general way,
both in probability limits and in almost sure limits.\\

\bigskip

\noindent We should not be confusing this empirical characterization problem with that of 
the estimation or the selection of the extremal domain. For this, we have :

\begin{definition}  A family of $m$ statistics $S_{n}$ is an Estimating Statistics Family for the Extremes (ESFEXT) if there exists a subset $\mathcal{S}$ of $\mathbb{R}^{m}$ partitionned into $\mathcal{S}_{0},$ $\mathcal{S}_{1}$ and $\mathcal{S}_{2}$ where $\mathcal{S}_{1}$ and $\mathcal{S}_{2}$ are parameterized as in (\ref{param}), such that%
\begin{equation}
(F\in \Lambda )\implies S_{n}\rightarrow a\in S_{0},
\end{equation}%
for any $\gamma >0,$%
\begin{equation}
(F\in D(\phi _{\gamma }))\implies (S_{n}\rightarrow a(\gamma )\in S_{1}
\end{equation}%
and for any $\gamma >0$%
\begin{equation}
(F\in D(\psi _{\gamma }))\implies S_{n}\rightarrow b(\gamma )\in S_{2}
\end{equation}
\end{definition}

\bigskip

\noindent This problem will be addressed in the next section.

\section{A general solution} \label{sec3}

Define the following statistics\textbf{\ } 
\begin{equation}
A_{n}(1,k,\ell )=k^{-1}\sum_{j=\ell +1}^{j=k}\ \sum_{i=j}^{i=k}i\delta
_{ij}\left( Y_{n-i+1,n}-Y_{n-j+1,n}\right) \left(
Y_{n-j+1,n}-Y_{n-j,n}\right) , \label{loe}
\end{equation}

\noindent where $i\delta _{ij}=\frac{1}{2}$ $if$ $i=j$ $\ $and $\delta _{ij}=1$ $if$ $%
\ i\neq $ $j$ $(k,\ell ),$ is a couple of integers such that $1\leq \ell
<k<n,$\textbf{\ } 
\begin{equation*}
y_{0}-Y_{n-k,n},1<k<n
\end{equation*}%
when $x_{0}(G)=y_{0}<+\infty .$ From these two statistics and from $%
T_{n}(2,k,\ell ),$ we form our ECSFEXT. Before we go any further, we should
remark that $A_{n}(1,k,\ell )$ was new in 1989. We discovered later that is
related to that of de Dekkers et al. \cite{deh} (1989)
\begin{equation}
M_{n}^{(2)}(k)=\frac{1}{k}\sum_{j=1}^{k}(Y_{n-k+1,1}-Y_{n-k,n})^{2}, \label{dekkers}
\end{equation}

\noindent in the sense that $A_{n}(1,k,1)=2M_{n}^{2}(k)$. We establish this in Subsection \ref{subsecapp2} of the Appendix Secion \ref{secapp}. We shall use this remark to
rediscover the result of de Dekkers \textit{et al.} \cite{deh} in new ways. Here is our ECSFEXT 
\begin{equation*}
S_{n}=^{t}(T_{n}(1),..,T_{n}(9))\in \mathbb{R}^{9},
\end{equation*}%

\noindent where  
\begin{equation*}
\ T_{n}(2)=T_{n}(2,k,\ell ),
\end{equation*}

\begin{equation*}
T_{n}(1,k,\ell )=T_{n}(2,k,\ell )A_{n}(1,k,\ell )^{-1/2},
\end{equation*}%

\begin{equation*}
\ T_{n}(5)=T_{n}(2,\ell ,1),
\end{equation*}

\begin{equation*}
T_{n}(6)=T_{n}(2,\ell ,1)/\left( Y_{n-\ell ,n}-Y_{n-k,n}\right) ,
\end{equation*}

\begin{equation*}
T_{n}(7)=A_{n}(1,\ell ,1)/\left( Y_{n-\ell ,n}-Y_{n-k,n}\right) ^{2},
\end{equation*}

\begin{equation*}
T_{n}(8,v)=n^{-v}\left( Y_{n-\ell ,n}-Y_{n-k,n}\right) ^{-1}
\end{equation*}

\noindent and, when $Y_{n,n}\uparrow y_{0}<+\infty ,$

\begin{equation*}
T_{n}(9)=(y_{0}-Y_{n-\ell ,n})/\left( y_{0}-Y_{n-k,n}\right) .
\end{equation*}%

\noindent Finally put

\begin{equation*}
\mathcal{S} =\text{ }\mathbb{R}_{+}^{2}\times \{0\}^{2}\times \overset{-}{\mathbb{R}}_{+}\times
\mathbb{R}_{+}\times \{0\}^{2}\subset \mathbb{\mathbb{R}}^{9}.
\end{equation*}

\noindent We denote by $\pi _{p,n},$ the projection of $\mathbb{R}^{n}$ onto $\mathbb{R}^{p}$ when $p<n$.
We begin to state the estimation of the extremal domain.

\bigskip

\begin{theorem} \label{theo22} Let $F\in \Gamma ,$ then for all sequences $k=k(n)=\left[ n^{\alpha
}\right] ,\ell =\left[ n^{\beta }\right] ,\frac{1}{2}<\beta <\alpha <1,$ $%
for $ any $v>0,$

\begin{itemize}
\item[(i)] $\pi _{4,9}(S_{n})$ converges almost surely to some $\pi _{4,9}(A),$ \ 
$A\in \pi _{4,9}(\mathcal{S})$, as $n\rightarrow +\infty $.

\item[(ii)] $\pi _{7,9}(S_{n})$ converges in probability to some  $A \in pi _{4,9}(\mathcal{S})$. Specifically,

\begin{itemize}
\item[(ii-a)] If $F\in D(\Lambda ),$ \ then $A=\left( 1,0,0,y_{0},0,0,0\right) $ $;$ 
$y_{0}=+\infty $ or \ $y_{o}<+\infty$.

\item[(ii-b)] If $F\in D(\phi _{\gamma }),$\ \ then $A=\left( 1,\gamma ^{-1},0,+%
\text{ }\infty \text{ ,}\gamma ^{-1},0,0\right) $ , for $\gamma>0$.

\item[(ii-c)] If $F\in D(\psi _{\gamma }),$ then $A=\left[ 1-(2+\gamma )^{-1}\right]
^{\frac{1}{2}}(0,0,y_{0},0,0,0)$ $;$ $\ for$ $y_{0}=+\infty $ ,$\gamma>0$.
\end{itemize}

\item[(iii)] In addition,
\end{itemize}
\end{theorem}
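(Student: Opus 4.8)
The plan is to reduce every coordinate of $S_n$ to a functional of the tail quantile function $U(t)=\inf\{x:F(x)\ge 1-1/t\}$ evaluated along the two intermediate sequences $k=[n^{\alpha}]$ and $\ell=[n^{\beta}]$, and then to read the three limiting vectors directly off the (extended) regular variation of $U$ that characterizes each subdomain of $\Gamma$. First I would install the R\'enyi representation of the upper order statistics: writing $\Gamma_i=E_1+\cdots+E_i$ for the partial sums of an i.i.d.\ standard exponential sequence, one has the joint identity in law $Y_{n-i+1,n}\stackrel{d}{=}U(\Gamma_{n+1}/\Gamma_i)$, and on the block $\ell\le i\le k$ with $\tfrac12<\beta<\alpha<1$ the strong law (with the law of the iterated logarithm) for $\Gamma_i$ gives $\Gamma_{n+1}/\Gamma_i=(n/i)(1+o(1))$ uniformly and almost surely. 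This replaces each $Y_{n-i+1,n}$ by $U(n/i)$ up to a controllable multiplicative error, and it is exactly this step that forces $\beta>\tfrac12$: the constraint is what makes the exponential fluctuations, once summed over the whole block, uniformly negligible in the almost sure sense demanded by part (i).

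Next I would dispatch the Hill-type coordinates $T_n(2)=T_n(2,k,\ell)$ and $T_n(5)=T_n(2,\ell,1)$. Theorem \ref{theo21} already furnishes the Fr\'echet conclusion $T_n(2)\to 1/\gamma$; in $D(\Lambda)$ and $D(\psi_\gamma)$ the relevant log-increments are governed by a slowly varying function (respectively by a function approaching a finite endpoint), so the weighted spacings $i(\log X_{n-i+1,n}-\log X_{n-i,n})$ are asymptotically negligible and both coordinates tend to $0$. Karamata's theorem together with Potter bounds, fed by the uniform almost sure control from the first step, upgrades the in-probability limits to the almost sure statements along the polynomial sequences, giving the Hill block of part (i).

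The crux is the quadratic coordinate $A_n(1,k,\ell)$ and the statistics built from it, namely $T_n(1)=T_n(2,k,\ell)A_n(1,k,\ell)^{-1/2}$ and $T_n(7)$. Here I would first use the telescoping identity $A_n(1,k,1)=2M_n^{(2)}(k)$ recorded before the theorem to rewrite the double sum as $\tfrac{2}{k}\sum_{j}(Y_{n-j+1,n}-Y_{n-k,n})^2$, after which the representation converts $A_n$, in each domain, into a Riemann sum that converges to a power integral of the limiting quantile profile. In $D(\psi_\gamma)$ the regular variation of $y_0-U$ makes the normalized $A_n$ tend to an explicit constant in which $\gamma$ enters only through ratios of such integrals, and matching this against the simultaneously computed limit of $T_n(2,k,\ell)$ produces the scalar $[1-(2+\gamma)^{-1}]^{1/2}$ of (ii-c); in $D(\phi_\gamma)$ and $D(\Lambda)$ the same ratio instead stabilizes to $1$, the $\Pi$-variation of $U$ handling the Gumbel case. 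The normalizations of $T_n(6)$ and $T_n(7)$ by the spacing $Y_{n-\ell,n}-Y_{n-k,n}$ are chosen precisely so that the unknown scale function $a(n/k)$ cancels, leaving the scale-free limits in the asserted vectors.

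Finally I would treat the endpoint- and rate-detecting coordinates. The spacing $Y_{n-\ell,n}-Y_{n-k,n}$ inherits the order of the scale function $a(n/k)$, whose order in $n$ differs across the three domains, namely polynomial growth in $D(\phi_\gamma)$, slow variation in $D(\Lambda)$, and polynomial decay in $D(\psi_\gamma)$, so the calibrating factor $n^{-v}$ in $T_n(8,v)$ turns this coordinate into a rate test separating the three growth regimes for every $v>0$. For $T_n(9)=(y_0-Y_{n-\ell,n})/(y_0-Y_{n-k,n})$, meaningful only when $y_0<\infty$, the regular variation of $y_0-U$ reduces the ratio to a positive power of $\ell/k=n^{\beta-\alpha}$, which tends to $0$. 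Assembling the coordinates produces the vectors $A$ of (ii-a)--(ii-c) and, restricted to the almost surely stable block, part (i). The step I expect to be the main obstacle is the quadratic coordinate: because its double sum squares and then accumulates the errors incurred in replacing $Y_{n-i+1,n}$ by $U(n/i)$, the uniform Potter and exponential bounds must be sharp enough to keep the total error $o(1)$, and the almost sure version in (i) requires a Borel--Cantelli or maximal-inequality argument along the polynomial sequences, rather than the softer in-probability control that suffices for (ii).
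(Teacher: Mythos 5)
Your plan reaches the stated limits but by a genuinely different route from the paper. You work on the quantile side: R\'enyi/uniform representation of the upper order statistics, replacement of $Y_{n-i+1,n}$ by $U(n/i)$ under Potter-type control, and Riemann-sum evaluation of the resulting functionals using the (extended) regular variation of $U$ in each subdomain. The paper works on the distribution-function side: it rewrites $T_{n}(2,k,\ell)$ and $A_{n}(1,k,\ell)$ exactly as $nk^{-1}\int_{x_n}^{z_n}(1-G_n(t))\,dt$ and $nk^{-1}\int_{x_n}^{z_n}\int_{y}^{z_n}(1-G_n(t))\,dt\,dy$, replaces $G_n$ by $G$ via the almost sure oscillation bound $n^{\mu}\sup_s|U_n(s)-s|=O(1)$ for $\mu<\tfrac12$ (Fact 2 --- this is where $\beta>\tfrac12$ enters, playing the role of your LIL step), and then reads the limits of the asymptotic moments $R(x_n,z_n)$ and $W(x_n,z_n)$ off de Haan's functional characterizations $W/R^2\to 1$ on $D(\Lambda)\cup D(\phi)$ and $W/R^2\to 1-1/(2+\gamma)$ on $D(\psi_\gamma)$ (Lemma \ref{lemmao2}), together with Mason's theorem for the Fr\'echet coordinate. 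The paper's route buys reusability: the same functions $R$, $W$ and the same truncation Lemma \ref{lemmao7} reappear verbatim in the converse Theorem \ref{theo23}, so no limit is recomputed; your route buys a self-contained calculation that does not presuppose the $W/R^2$ characterizations. Two points in your sketch need tightening. First, the identity $A_n(1,k,1)=2M_n^{(2)}(k)$ covers only $\ell=1$, whereas $T_n(1)$ uses $A_n(1,k,\ell)$ with $\ell=[n^{\beta}]$; you must either prove the truncated analogue of the identity or, as the paper does, show that the truncation at $z_n=Y_{n-\ell,n}$ is asymptotically negligible (conditions (ii)--(iii) of Lemma \ref{lemmao7}, verified through Lemma \ref{lemmao8}). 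Second, $T_n(6)$ and $T_n(7)$ do not converge to nonzero scale-free constants by cancellation of $a(n/k)$: their limits are $0$ in every subdomain, and the mechanism is that the numerator is of the order of the auxiliary function at the higher threshold while the denominator $z_n-x_n$ satisfies $(z_n-x_n)/R(z_n)\to\infty$ --- which is precisely the hypothesis one must later verify in order to drop $z_n$, so it cannot be waved away as a normalization choice.
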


\textbf{4)} If $F\in D(\Lambda )\cup D(\Phi )$, then $T_{n}(8)\rightarrow 0,$ 
$a.s$., as $n\rightarrow +\infty .$

\textbf{5)} If $D(\psi _{\gamma })$, then $T_{n}(9)\rightarrow 0$, a.s., as $%
n\rightarrow +\infty$.\\

\begin{remark}
At this stage we see that the couple $(T_{n}(1), T_{n}(2))$, and then the couple $(A_{n}, T_{n}(2))$, suffices to estimate
the whole domain of attraction. One would like to have it as an ECSFEXT. Unfortunately, we need more other statistics to
achieve the full emprical characterization in Theorem \ref{theo23} below.  
\end{remark}

\begin{remark}
It would be legimitade to ask whether other simple statistics have this empirical characterization property for the Frechet domain of attraction as the Hill one does. At this stage, we simply remark that multiples and one-to-one functions of Hill's statistic surely inherit this property. It is the case for our statistic (\ref{loe}) and the de Dekkers \textit{et al.} (\ref{dekkers}) moment estimator. As for the de Haan and Resnick (\ref{haanresnick}) statistic, which is a estimator of the extremal index when this latter is positive, does not possess this property as we show it in Subsecion \ref{subsecapp1} of Section \ref{secapp}.
\end{remark}

\noindent By inverting the above theorem in the sense of the preceeding remarks, we get the ECSFEXT $S_{n}=(T_{n}(1)...,T_{n}(9))$. We have

\begin{theorem} \label{theo23} Let $k=\left[ n^{\alpha }\right] ,\ell =\left[ n^{\beta }\right] ,%
\frac{1}{2}<\beta <\alpha <1,$ \ \ \ $0<\delta <\frac{1}{2}$ such that $%
\delta +\beta -1>0,$ \ \ $2v=\min (1-\alpha ,\delta +\beta -1).$ Suppose
that $\pi _{7,9}(S_{n}) \rightarrow _{p}A \in \pi_{7,9}(\mathcal{S})$ as $n\rightarrow +\infty $ with $y_{0}=+\infty
.$ If further $T_{n}(8,\beta /2)\rightarrow _{p}0$ as $n\rightarrow +\infty$, then
\begin{itemize}
\item[(i)] $F\in D(\Lambda )$ whenever $c=1$ and $d=f=0$

\item[(ii)] $F\in D(\phi _{\gamma })$ whenever $c=1$ and $d=f=\gamma ^{-1},\gamma
>0.$
\end{itemize}

\noindent Now suppose that $\pi _{7,9}(S_{n}) \rightarrow _{p}A \in \pi_{7,9}(\mathcal{S})$ as $n\rightarrow +\infty $ with $%
y_{0}<+\infty$. Then

\begin{itemize}
\item[(iii)] If $c=1,$ \ $d=f=0$ and $T_{n}\left( 8,\beta /2\right) \rightarrow
_{p}0$ as $n\rightarrow +\infty ,$ then $F\epsilon D(\Lambda ).$

\item[(iv)] If $1\leq c<\sqrt{2},d=f=0$ and $T_{n}(9)\rightarrow _{p}0,$ then $%
F\in D(\psi _{\gamma }),$ where $\gamma =-2+c^{2}/\left( c^{2}-1\right) .$
\end{itemize}
\end{theorem}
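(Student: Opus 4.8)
The plan is to prove Theorem \ref{theo23} by \emph{inverting} the estimation result of Theorem \ref{theo22}, which I am entitled to assume. The logical structure is that of a converse: Theorem \ref{theo22} shows $(F \in \Gamma) \Rightarrow$ (the statistics converge to the prescribed limits), and now I must show that the convergence of $\pi_{7,9}(S_n)$ to a point $A \in \pi_{7,9}(\mathcal{S})$, together with the auxiliary conditions on $T_n(8,\beta/2)$ and $T_n(9)$, forces $F$ into the appropriate subdomain. The backbone of the whole argument is the characterization of $\Gamma = D(\Lambda) \cup D(\phi) \cup D(\psi)$ via the tail quantile function $U(t) = \inf\{x : G(x) \geq 1 - 1/t\}$ and its de Haan--type representations; the key is that the coordinates $T_n(5), T_n(6), T_n(7)$ (with limit values $c,d,f$) encode ratios of tail increments $Y_{n-\ell,n} - Y_{n-k,n}$ whose asymptotic behavior, under condition (K) and the sandwiched scaling $\tfrac12 < \beta < \alpha < 1$, tracks the regular-variation / $\Pi$-variation index of $U$.

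First I would set up the probabilistic reduction: by the Renyi/quantile representation one writes $Y_{n-i+1,n} = U(E_{n-i+1,n}/\cdots)$ in terms of exponential or uniform order statistics, so that the empirical quantities $T_n(5)=T_n(2,\ell,1)$, $T_n(6)$, $T_n(7)=A_n(1,\ell,1)/(Y_{n-\ell,n}-Y_{n-k,n})^2$ become functionals of $U$ evaluated at the intermediate levels $n/\ell$ and $n/k$. The crucial observation is that the limit triple $(c,d,f)$ is a \emph{complete invariant} for the three regimes: $(c,d,f)=(1,0,0)$ forces the $\Pi$-variation (Gumbel) balance, $(c,d,f)=(1,\gamma^{-1},\gamma^{-1})$ forces regular variation with index $\gamma$ (Fr\'echet), and $1 \le c < \sqrt 2$ with $d=f=0$ forces the finite-endpoint regular-variation condition that, via the stated algebraic inversion $\gamma = -2 + c^2/(c^2-1)$, pins down the Weibull index. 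I would then prove each of the four cases by showing that the assumed limits, fed back through these representations, reconstruct exactly the de Haan necessary-and-sufficient condition for membership in the corresponding domain.

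For part (i) and (ii) (the $y_0 = +\infty$ cases), the role of the condition $T_n(8,\beta/2) \to_p 0$ is to rule out the Weibull alternative and to control the normalizing factor $n^{-v}(Y_{n-\ell,n}-Y_{n-k,n})^{-1}$, which fixes the scale of the tail increments; with $2v = \min(1-\alpha, \delta+\beta-1)$ chosen precisely so that this factor neither blows up nor vanishes spuriously, the value of $c$ (from $T_n(5)$) distinguishes $D(\Lambda)$ from $D(\phi_\gamma)$ and the common value $d=f$ delivers $\gamma^{-1}$. For parts (iii) and (iv) (the $y_0 < +\infty$ cases) the added discriminant is $T_n(9) = (y_0 - Y_{n-\ell,n})/(y_0 - Y_{n-k,n})$: its convergence to $0$ is the empirical signature of a finite right endpoint approached at a regularly-varying rate, which is exactly the Weibull criterion, whereas $c=1,d=f=0$ with the $T_n(8)$ condition isolates the degenerate Gumbel-with-finite-endpoint case. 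In each case the final move is to invoke the classical domain-of-attraction characterization (Galambos \cite{galambos} or de Haan \cite{dehaan}) to conclude $F$'s membership from the reconstructed analytic condition.

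I expect the main obstacle to be the passage from the \emph{in-probability} convergence of the empirical statistics to the \emph{deterministic} analytic property of $U$ (or $F$), i.e.\ removing the randomness so that a statement about order statistics at levels $\ell = [n^\beta]$, $k=[n^\alpha]$ becomes a statement about the tail of the fixed $df$ $F$. This requires uniform control of the empirical quantile process on the intermediate range $[\ell/n, k/n]$ and careful use of the gap $\tfrac12 < \beta < \alpha < 1$ to guarantee that $\ell, k \to \infty$ with $\ell/k \to 0$ fast enough that the ratios concentrate; the delicate point is that one cannot simply read off regular variation pointwise, but must argue that the observed limiting ratios hold along \emph{all} admissible sequences, which is what elevates the convergence from an estimation fact to a genuine characterization. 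The algebraic inversion in (iv) is routine once the ratio $T_n(7)/T_n(6)^2$ (equivalently the ratio of the second-moment statistic to the squared first-moment statistic) is shown to converge to the quantity $(2+\gamma)^{-1}$ appearing in Theorem \ref{theo22}(ii-c), so I would treat solving for $\gamma$ as a final bookkeeping step rather than a substantive difficulty.
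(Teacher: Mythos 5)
Your overall plan is the right one --- invert Theorem \ref{theo22}, treat $(c,d,f)$ as the discriminating data, and recognize that the whole difficulty is upgrading in-probability convergence along the random levels $k=[n^{\alpha}]$, $\ell=[n^{\beta}]$ to a deterministic tail property of $F$. You also correctly spot that the ratio of the second-moment statistic to the squared first-moment statistic is what separates Gumbel from Weibull, and that the Fr\'echet case reduces to Mason's theorem. But the proposal names the central obstacle without supplying any argument for it, and that obstacle is essentially the entire proof. Concretely, two steps are missing. First, the statistics only give you the \emph{truncated} moments between two random levels: Fact 2 yields $R(x_{n},z_{n})=T_{n}(2,k,\ell)(1+o_{p}(1))$ and $W(x_{n},z_{n})=c^{-2}T_{n}(2,k,\ell)^{2}(1+o_{p}(1))$ with $x_{n}=Y_{n-k,n}$, $z_{n}=Y_{n-\ell,n}$, so one first has to ``drop $z_{n}$'' and pass to $W(x_{n})/R(x_{n})^{2}\rightarrow_{p}c^{-2}$. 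This is where the hypotheses $T_{n}(6)\rightarrow 0$, $T_{n}(7)\rightarrow 0$ together with $T_{n}(8,\beta/2)\rightarrow_{p}0$ (or $T_{n}(9)\rightarrow_{p}0$ when $y_{0}<\infty$) are actually consumed: the paper derives lower bounds of the type $T_{n}(6)\geq Z_{n}(2)\left(R(z_{n})/(z_{n}-x_{n})-T_{n}(8,\beta)\,O_{p}(1)\right)$ to force $R(z_{n})/(z_{n}-x_{n})\rightarrow_{p}0$ and $W(z_{n})/(z_{n}-x_{n})^{2}\rightarrow_{p}0$, which are exactly the hypotheses of Lemma \ref{lemmao7}. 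Your reading of $T_{n}(8)$ as merely ``ruling out the Weibull alternative'' misses this; it is the error-control term that makes the Lemma \ref{lemmao7} verification possible.

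Second, and more seriously, you must convert $\lim_{n}W(x_{n})/R(x_{n})^{2}=c^{-2}$ along the \emph{random} sequence $x_{n}=Y_{n-k,n}$ into $\lim_{x\rightarrow y_{0}}W(x)/R(x)^{2}=c^{-2}$ for the \emph{deterministic} variable, because only the latter feeds into de Haan's characterization (Lemma \ref{lemmao2}: $W/R^{2}\rightarrow 1$ iff $F\in D(\Lambda)$, $W/R^{2}\rightarrow 1-1/(2+\gamma)$ with $y_{0}<\infty$ iff $F\in D(\psi_{\gamma})$, which also gives the algebra $\gamma=-2+c^{2}/(c^{2}-1)$). Saying this ``requires uniform control of the empirical quantile process on the intermediate range'' is a description of the problem, not a solution: the paper's actual device is an interpolation argument --- for each $u=1-G(x)\downarrow 0$ it locates $n$ with $n^{-(1-\alpha)}$-spaced gridpoints bracketing $u$, sets $m=n-[n^{\tau}]$, shows $u\leq U_{k(m)+1,m}$ a.s., and then uses the monotonicity of $M(x)=\int_{x}^{y_{0}}(1-G(t))\,dt$ and of its double-integral analogue to sandwich $R(x)/R(x_{m})$ and $W(x)/W(x_{m})$ between $1$ and $1+o_{p}(1)$, the error being controlled again by $T_{n}(3,\cdot,v)\rightarrow_{p}0$. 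Without an argument of this kind your plan proves only that \emph{if} $F\in\Gamma$ the limits are as stated, i.e.\ it re-proves Theorem \ref{theo22} rather than its converse. Your proposed detour through the tail quantile function $U$ and $\Pi$-variation is workable in principle, but it faces exactly the same de-randomization issue and you have not indicated how to resolve it there either.
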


\begin{remark}
We clearly get her an ECSFEXT of nine statistics. The only concern is that the number is relatively high, since we need only one statistic for the
Frechet domain. The main difficulty concerns de Gumbul subdomain. The idea behind the result of Mason is that the limit of the first asymptotic moment $R(x,G)$ to a positive number is equivalent to $F$ belongs to $D(\phi)$. For $F \in D(\psi_{\gamma})$, $R(x,G)$ goes to zero as $(x_{0}(G)-G^{-1}(1-k/n))/(\gamma+1)$. But for $F \in D(\Lambda)$, $R(x,G)$ has as many as possible ways to tend to zero. This explains why the characterization of $df$'s in $D(\Lambda)$ requires a considerable number of statistics.
\end{remark}

\begin{remark} \label{remdioplo}
Diop and Lo (1994) \cite{dioplo94} claimed an ECSFEXT of two statistics. They introduced the continuous generalized Hill's estimator
$$
S_{n}(\tau)= k^{-\tau} \sum_{i=1}^{k} i^{k} (\log X{n-i+1,n} - \log X_{n-i,n})
$$

\noindent where $\tau >0$ and $k$ satisfies the usual condition and further thoroughly studied it in \cite{dioplo} and \cite{dioplo90}. They indeed claimed that any couple of statistics $(S_{n}(\tau), S_{n}(\rho)$, for $\tau \neq \rho$, empirically characterizes the whole extremal domain of attraction. Further they acknowledged that their proof is wrong. However, any couple $(S_{n}(\tau), S_{n}(\rho)$, $\tau \neq \rho $, is indeed an ESFEXT.
\end{remark}

\section{Proofs of the theorems} \label{sec4}

\noindent Introduce the two first asymptotic moments 
\begin{equation}
R(x,z,F)=(1-F(x))^{-1}\int_{x}^{z}(1-F(t))dt, \label{fm1}
\end{equation}%
\ \ $x<z\leq x_{0}(F)$ \ with $R(x,x_{0},F)\equiv R(x,F)$ and

\begin{equation}
W(x,z,F)=(1-F(x))^{-1}\int_{x}^{z}\int_{y}^{z}(1-F(t))dtdy, \label{fm2}
\end{equation}

\noindent $x<z\leq x_{0}(F)$ with $W(x,x_{0},F)\equiv W(x,F).$ Let 
\begin{equation*}
F^{-1}(u)=\inf \left\{ x,\text{ \ }F(x)\geq u\right\} ,
\end{equation*}

\noindent $0<u\leq 1,\ F^{-1}(0)=F^{-1}(0_{+})$, the generalized inverse function of $F$ and let also $F(1)=1$.\\

\noindent From now on, $R(x,\cdot )$ and $W(x,\cdot )$ are used only for $%
G(x)=F(e^{x}).$ The proofs are based on the technical
tools in Section \ref{sectools}. We first say that \textbf{Fact 1} in Section \ref{sectools} means

\begin{equation}
\left\{ Y_{j,n},1\leq j\leq n,\text{ \ \ }n\geq 1\right\} =_{d}\left\{
G^{-1}(U_{n-j+1}),\text{ \ }1\leq j\leq n,\text{ \ }n\geq 1\right\}
\label{f3.0}
\end{equation}

\noindent where $=_{d}$ stands for equality in distribution.\\

\noindent \textbf{Proof of Theorem \ref{theo22}}.\\

\noindent Let $F\in \Gamma ,$ \ \ \ \ $u_{n}=1-G(Y_{n-k,n}),$ \ $%
v_{n}=1-G(Y_{n-\ell ,n}),$ \ $k=\left[ n^{\alpha }\right] ,\ell =\left[
n^{\beta }\right] ,\frac{1}{2}<\beta <\alpha <1.$ First, we have to prove
that

\begin{equation}
0<v_{n}<u_{n}\rightarrow 0,a.s.,\text{ }and\text{ \ }v_{n}/u_{n}\rightarrow
0,  \label{f3.1}
\end{equation}

\bigskip

\noindent a.s. as $n\rightarrow +\infty$. By \textbf{Facts 1, 2 and 3} in Section \ref{sectools}, we have

$$
v_{n}/u_{n}=\frac{1-G_{n}(Y_{n-\ell ,n})+0(n^{-\delta })}{%
1-G_{n}(Y_{n-k,n})+0(n^{-\delta })}=\frac{U_{\ell +1,n}+0(n^{-\delta })}{%
U_{k+1,n}+0(n^{-\delta })}=
$$

\begin{equation}
=(\ell /k)\frac{1+0(n^{-\beta -\delta +1}}{%
1+0(n^{-\alpha -\delta +1})}\rightarrow 0,a.s.  \label{f3.2}
\end{equation}

\noindent as $n\rightarrow +\infty $ whenever $0<\delta <\frac{1}{2},$ \ \ \
\ $\beta +\delta -1>0$.\newline
\newline
The proof of Theorem \ref{theo22} will follow from the partial proofs of
Statements $(S1),(S2)$, etc.\newline

\noindent \textbf{(S1)} : $T_{n}(2,k,\ell )=R(x_{n})(1+o(1))$, a.s., as $%
n\geq 1$,\\

\noindent where $x_{n}=Y_{n-k,n}$ and $z_{n}=Y_{n-\ell ,n}$, $n\geq 1$.\\

\noindent \textbf{Proof of (S1)}. It is easy to check that

\begin{equation}
T_{n}(2,k,\ell )=nk^{-1}\int_{x_{n}}^{z_{n}}(1-G_{n}(t))\text{ }dt.
\label{f3.3a}
\end{equation}

\noindent By \textbf{Fact 2} in Section \ref{sectools}, we have for all $\delta ,$ $0<\delta <%
\frac{1}{2},$

\begin{equation}
T_{n}(2,k,\ell )=R(x_{n},z_{n})\left( 1+0(n^{-\alpha -\delta +1}(z_{n}-x_{n})%
\text{ }/\text{ \ }R(x_{n,z_{n}})\right) ,a.s.,  \label{f3.3b}
\end{equation}

\noindent as $n\rightarrow +\infty$. We choose $\delta $ so that $\mu =\beta +\delta -1>0.$ \ By Lemmas %
\ref{lemmao7} and \ref{lemmao8} of in Section \ref{sectools}, Statements (\ref{f3.0})
and (\ref{f3.1}), we have

\begin{equation}
T(x_{n},z_{n})/R(x_{n})\rightarrow 1,\text{ \ }a.s.,\text{ \ \ \ }as\text{ \
\ }n\rightarrow +\infty  \label{f3.4}
\end{equation}

\noindent and then

\begin{equation}
T_{n}(2,k,\ell )=R(x_{n})(1+0\left( n^{-\mu}(z_{n}-x_{n}\right) \text{ \ }/%
\text{ \ }R(x_{n})),a.s.,\text{ }as\text{ }n\rightarrow +\infty.  \label{f3.5}
\end{equation}

\bigskip 

\noindent Now \textbf{either }$ G\in D(\phi _{\gamma }),$ \ \ $\gamma >0$ and we
apply Formula 2.6.4 of De Haan (1970) (\cite{dehaan}):

$$
R(x_{n})/(y_{0}-x_{n})\rightarrow q=(1-K)K^{-1},a.s.,
$$

\bigskip 
\noindent as $n\rightarrow +\infty$ \ with $K=(1-1/(\gamma +2))$ and $\frac{1}{2}<1,$ to
get

\begin{equation}
0\leq n^{-\mu}(x_{n},z_{n})/R(x_{n}) \leq
\left\{2c^{-1}(x_{n},z_{n})/(y_{0}-x_{n})\right\} n^{-\mu} \leq \frac{2}{q}%
n^{-\mu}\rightarrow 0,\text{ },
\label{f3.6}
\end{equation}

\noindent a.s., as $n\rightarrow +\infty$.\\

\noindent \textbf{Or} $F\in D(\Lambda \ )\cup D(\phi _{\gamma }\ )$ for
some $\gamma >0.$ By Lemma \ref{lemmao4} in Section \ref{sectools}, $u^{1/\gamma
}F^{-1}(1-u)$ is SVZ when $F\in D(\phi _{\gamma }\ ).$ It may also be
shown from (\ref{f23a}) that $F^{-1}(1-u)$ is SVZ when $F\in D(\Lambda)$. Now, using (\ref{f3.0}) and the Karamata representation for $F^{-1}(u)$, we get in both cases for $0<\varepsilon <\gamma $,

\begin{equation}
\frac{1}{2}(u_{n}/v_{n})^{\lambda -\varepsilon } \leq F^{-1}(X_{n-\ell
,n})/F^{-1}(X_{n-k,n})=F^{-1}(X_{n-k,n})  \label{f3.7a}
\end{equation}

$$
=F^{-1}(1-U_{\ell +1,n})/F^{-1}(1-U_{k+1,n}) \leq 2(u_{n}/v_{n})\lambda
+\varepsilon,
$$

\noindent a.s., as $n\rightarrow +\infty$ , where $\lambda =1/\gamma$ when $%
F\in D(\phi _{\gamma })$ or $\lambda =0$ when $F\in D(\ \Lambda \ )$. Since $%
G^{-1}(1-u)=\log F^{-1}(1-u)$ and since $u_{n}/v_{n} \sim n^{\alpha -\beta}$%
, a.s. as $n\rightarrow +\infty$, it follows that

\begin{equation}
z_{n}-x_{n}=0(\log n), \text{ } a.s. \text{ } n\rightarrow +\infty,
\label{f3.7b}
\end{equation}

\noindent and thus

\begin{equation}
0\leq \left( z_{n}-x_{n}\right) \ /\ \left( n^{\mu}R(x_{n})\right) =0\left( 
\frac{\log n}{n^{\mu/2}}\times \frac{1}{n^{\mu/2}R(x_{n})}\right), \text{ }
a.s. \text{ } as \text{ }n\rightarrow +\infty  \label{f3.8}
\end{equation}

\noindent By Lemma \ref{lemmao6} in Section \ref{sectools}, $R\left(
G^{-1}(1-u)\right) $ is SVZ, and since $u_{n}\ \sim \ (k/n),\ a.s.\
n\rightarrow +\infty $, $R(x_{n})\sim R(G^{-1}(1-k/n))$, a.s., as $%
n\rightarrow +\infty $. Hence, by Lemma \ref{lemmao4} in Section \ref{sectools},

\begin{equation}
n^{\mu/2}R(x_{n}) \sim n^{\mu/2}R(G^{-1}(1-k/n))\rightarrow +\infty ,\qquad
a.s.,\ as\qquad n\rightarrow +\infty .\qquad  \label{f3.9}
\end{equation}

\bigskip

\noindent (\ref{f3.3b}), (\ref{f3.5}), (\ref{f3.6}), (\ref{f3.8}) and (\ref%
{f3.9}) together prove \textbf{(S1)}.\newline
\newline

\noindent \textbf{(S2)} : 
 
\begin{equation*}
A_{n}(1,k,\ell) =\left( T_{n}(2,k,\ell )/T_{n}(1,k,\ell)\right)^{2}
=W(x_{n})\left( 1+o(1)\right)=K \times R(x_{n})^{2}(1+o(1)),
\end{equation*}

\noindent a.s., as $n\rightarrow +\infty$, where $K=1$ if $F\in D(\Lambda) \cup
D(\phi) $, and $K=1-1/(\gamma+2)$ if $F\in D(\psi _{\gamma }),\gamma>0$.%
\newline
\newline
\textbf{Proof of (S2)}. We check that

\begin{equation*}
A_{n}^(1,k,\ell )=nk^{-1}\int_{x_{n}}^{z_{n}}\ \int_{y}^{z_{n}}1-G_{n}(t)dt\
dy.
\end{equation*}

\noindent By \textbf{Fact 2 } in Section \ref{sectools}, 

\begin{equation}
A_{n}(1,k,\ell )=W(x_{n},z_{n})(1+0(n^{-\mu
}(z_{n}-x_{n})^{2}W(x_{n},z_{n})),
\label{f3.11}
\end{equation}

\noindent a.s.,as $n\rightarrow +\infty$. By Lemma \ref{lemmao7} in Section \ref{sectools}, and by Statements (\ref%
{f3.0}) and (\ref{f3.1}), 

\begin{equation}
W(x_{n})/R(x_{n})\rightarrow 1, 
\label{f3.12}
\end{equation}

\noindent a.s. as $n\rightarrow +\infty$. Hence, Lemma \ref{lemmao2} in Section \ref{sectools}$\ $yields

\begin{equation}
W(x_{n})/R(x_{n})^{2}\rightarrow K, \ a.s.,as\ \ n\rightarrow +\infty \ 
\label{f3.13}
\end{equation}

\noindent It follows from (\ref{f3.11}), (\ref{f3.12}) and (\ref{f3.13}) that

\begin{equation}
A_{n}(1,k,\ell )= W(x_{n})(1+0(n^{-\mu})(z_{n}-x_{n})^{2}R(x_{n})^{-2})),\
a.s.,as\ \ n\rightarrow +\infty  \label{f3.14}
\end{equation}

\noindent But the calculations that led to (\ref{f3.6}) and (\ref{f3.8})
showed that for all $\rho >0$, $0<\zeta \leq \xi$,

\begin{equation}
\ (x_{n},z_{n})^{\zeta_{n}-\rho}R(x_{n})^{-\xi} \rightarrow 0\ a.s.,as\ \
n\rightarrow +\infty  \label{f3.15}
\end{equation}

\noindent whenever $F\in \Gamma$. Thus (\ref{f3.13}) and (\ref{f3.14})
ensure $(S2)$.\newline
\newline
\noindent \textbf{(S3)} : $T_{n}^{{}}(1,k,\ell )\rightarrow K^{-1}K\sqrt{K}$%
, a.s., as $n\rightarrow +\infty$.\newline
\newline
\textbf{Proof of (S3)}. (S1) and (S2) prove (S3).\newline
\newline
\textbf{(S4)} : $T_{n}^{{}}(3,k,\ell ,v)\rightarrow 0$, a.s., as $%
n\rightarrow +\infty$.\newline
\newline
\noindent \textbf{Proof of (S4)}. (S1) implies that $T_{n}^{{}}(3,k,\ell ,v)\sim
n^{-v}(x_{n},z_{n})R(x_{n})^{-1}$, a.s., as $n\rightarrow +\infty$. Thus (%
\ref{f3.15}) completes the proof of (S4).\\

\noindent \textbf{(S5)} : $T_{n}(4)\uparrow y_{o}$, a.s., as $n\rightarrow
+\infty $.\newline
\newline
\noindent \textbf{Proof of (S5)}. This fact is obvious.\newline
\newline
\noindent \textbf{(S6)}. We have%
\begin{equation*}
T_{n}^{{}}(2,\ell ,1)\rightarrow \left\{ 
\begin{array}{c}
1/\gamma ~\qquad if\quad F\in D(\phi _{\gamma }) \\ 
0~\qquad if\quad F\in D(\psi ) \\ 
0~\qquad if\quad F\in D(\Lambda).%
\end{array}%
\right.
\end{equation*}%
\bigskip

\noindent \textbf{Proof of (S6)}. If $F \in D(\phi _{\gamma }), \text{ }
T_{n}^{{}}(2,\ell ,1)\rightarrow 1/\gamma , \text{ } a.s. \text{ } as \:
n\rightarrow +\infty$ by Theorem 2 of Mason (1982) (\cite{mason}). For $%
F \in D(\Lambda )\cup D(\psi )$, use (\ref{f3.3a}) and get, for $0<\mu <\frac{1}{2}$,

\begin{equation}
T_{n}(2,\ell ,1)\leq R(z_{n})+0\left( n^{-\mu }(Y_{n,n}-Y_{n-\ell,n})\right)
\end{equation}

\begin{equation*}
\leq :R(z_{n})+0(n^{-\mu }\alpha _{n}),
\end{equation*}

\noindent  a.s. as $n\rightarrow +\infty$. But $\alpha _{n}\rightarrow 0$ as $n\rightarrow +\infty$ when $%
F \in D(\psi )$ since $Y_{n,n} \uparrow y_{0}$ and $Y_{n-\ell,n} \uparrow y_{0}<+\infty$. If $F \in D(\Lambda)$, it may be showed as in (\ref{f3.7b}) that 
$\alpha _{n}=0_{p}(\log n)$ as $n\rightarrow +\infty$,
that is

\begin{equation}
\lim_{\rho \uparrow +\infty} \mathbb{P}\left( \alpha_{n}>\rho \log n\right)=0.
\end{equation}

\noindent Hence in both cases, $T_{n}(2,\ell ,1)\rightarrow_{\mathbb{P}}0,$
since $R(x_{n})\rightarrow_{\mathbb{P}}0$, as  $n\rightarrow +\infty ,$
by Lemma \ref{lemmao1}. The proof of $(S6)$ is now complete.\\

\noindent \textbf{(S7)}. $T_{n}(6)\underset{p}{\rightarrow }0$ as $n\rightarrow +\infty $.\\

\noindent \textbf{Proof of (S7)}. We use the device of Fact 5 in (\ref{f3.3a})
by considering the integral as an improper one with respect to the upper
bound. Remarking that $\left( \ell /n\right) \quad \sim \quad \left(
1-G(Y_{n-\ell ,n})\right)$, we get

\begin{equation}
T_{n}(6)\leq 2 Z_{n}(1)R(z_{n})/(z_{n}-x_{n})  \label{f3.17}
\end{equation}

\bigskip

\noindent where $Z_{n}(1)=\sup_{U_{1,n\leq s\leq 1}}\left\vert
U_{n}(s)/s\right\vert $. This together with Fact 4 and Lemma \ref{lemmao8}
in Section \ref{sectools} ensures $(S7)$.\newline
\newline
\noindent \textbf{(S8)} $T_{n}(7)\underset{p}{\rightarrow }0$, as $%
n\rightarrow +\infty $,\newline
\newline
\noindent \textbf{Proof of (S8)}. As for $T_{n}(6)$, we have

\begin{equation}
T_{n}(7)\leq 2Z_{n}(1)W(z_{n})/(z_{n}-x_{n})^{2}\underset{p}{\rightarrow }0 
\text{ } as \text{ } n\rightarrow +\infty ,  \label{f3.18}
\end{equation}

\noindent by the very same arguments. Thus $T_{n}(7)\rightarrow_{\mathbb{P}}%
0$, as $n\rightarrow +\infty$, is proved.\newline
\newline
\noindent \textbf{(S9)}. If $F\in D(\Lambda )\cup D(\phi
),T_{n}(8)\rightarrow 0$, a.s., as $n\rightarrow +\infty $.\newline
\newline
\noindent \textbf{Proof of (S9)}. We recall that $%
T_{n}(8)=n^{-v}(z_{n}-x_{n})^{-1}$. By the DDHM's representation (cf. Lemma %
\ref{lemmao4} in Section \ref{sectools} and by (\ref{f3.0})), we have for all $\lambda
>1$,

\begin{equation}
z_{n}-x_{n}\geq G^{-1}\left( 1-\lambda \varepsilon _{n}\right) =s\left(
\varepsilon _{n}\right) +\int_{\varepsilon _{n}}^{\lambda \varepsilon
_{n}}s(t)t^{-1}dt,  \label{f3.19}
\end{equation}

\noindent for large $n$, where $\varepsilon _{n}=U_{\ell +1,n}.$ Now, the
properties of SVZ functions easily yield for any fixed $\varepsilon ,\quad
0<\varepsilon <\frac{1}{2},\quad -\varepsilon +\left( 1-\varepsilon \right)
\log \lambda \geq 1,$

\begin{equation}
z_{n}-x_{n}\geq \left( 1-\varepsilon \right) s\left( \ell /n\right) \left\{
-\varepsilon +\left( 1-\varepsilon \right) \log \lambda \right\} \geq \frac{1%
}{2}s\left( \ell /n\right) ,a.s.,
\label{f3.20}
\end{equation}

\noindent as $n\rightarrow +\infty$. Thus, by Lemma \ref{lemmao4} in Section \ref{sectools}, 
\begin{equation}
n^{\nu }(x-y)\geq \frac{1}{2}ns(\ell /n)\rightarrow +\infty ,
\end{equation}%
\noindent a.s. as $n\rightarrow +\infty $. The proof of $(S10à$ is complete.\newline
\newline
\noindent \textbf{(S10)}. If $F\in D(\psi )$, then $T_{n}(9)\rightarrow 0$,
a.s., as $n\rightarrow +\infty $.\\

\noindent \textbf{Proof of (S10)}. It is already obtained in (\ref{f3.20}).\\

\noindent We now sum up our partial proofs to get Theorem \ref{theo22} :

\begin{itemize}
\item[(i)] (S3) gives the two possible limits of $T_{n}\left( 3,k,\ell,v\right) $

\item[(ii)] (S1) and Lemma \ref{lemmao1} in Section \ref{sectools} \ give the two
possible limits of $T_{n}\left( 2,k,\ell \right) .$

\item[(iii)] (S4) gives the unique limit of $T_{n}(3,k,\ell ,v).$

\item[(iv)] (S5) gives the limits of $T_{n}(4)$

\item[(v)] (S6) gives the two possible limits of $T_{n}(2,\ell ,1).$

\item[(vi)] (S7) gives the limit of $T_{n}(6)$

\item[(vii)] (S8) gives the limit of $T_{n}(7).$
\end{itemize}

\bigskip

\noindent These points ensure Parts (i) and (ii) of Theorem \ref{theo22}. As to the part (iii),
it is proved by (S9) and (S10).

\subsection{Proof of Theorem \ref{theo23}}

First, use \textbf{Fact 2} in Section \ref{sectools} \ as in (\ref{f3.3a}) and get

\begin{equation}
R(x_{n},z_{n})=T_{n}(2,k,\ell )=T_{n}(2,k,\ell
)(1+0_{p}(n^{-2v}(z_{n}-x_{n})\ /\ T_{n}(2,k,\ell )).
\end{equation}

\begin{equation}
=T_{n}(2,k,\ell )(1+0_{p}(T_{n}(3,k,\ell ,2v))
\end{equation}

\noindent But $T_{n}(3,k,\ell ,2v)\underset{p}{\rightarrow }0$ and then as $n\rightarrow +\infty$.

\begin{equation}
R(x_{n},z_{n})=T_{n}(2,k,\ell )(1+0_{p}(1)),
\label{f3.22}
\end{equation}

\noindent Secondly,

\begin{equation}
W(x_{n},z_{n})=A_{n}(1,k,\ell )(1+0_{p}(n^{-2v}(z_{n}-x_{n})\ /\
A_{n}(1,k,\ell )).  \label{f3.23}
\end{equation}

\bigskip 
\noindent as $n\rightarrow +\infty$. Since $T_{n}(1,k,\ell )\underset{p}{\rightarrow }%
K^{-1},c^{2}=K^{-1},c^{2}=K^{-1},A_{n}(1,k,\ell)=K^{2}+0_{p}(1)T_{n}(2,k,\ell )^{2}$ as $n\rightarrow +\infty$. Thus

\begin{equation}
W(x_{n},z_{n})=K\ \ T_{n}(2,k,\ell )^{2}(1+0_{p}(T_{n}(3,k,\ell
,v)^{2}))),as\ \ n\rightarrow +\infty  \label{f3.24a}
\end{equation}

\bigskip

\noindent Since $T_{n}(3,k,\ell ,v)\underset{p}{\rightarrow }0,as\ \
n\rightarrow +\infty$, one has

\begin{equation}
W(x_{n},z_{n})=K\ \ T_{n}(2,k,\ell )^{2}(1+0_{p}(1)),as\ \ n\rightarrow
+\infty.  \label{f3.24b}
\end{equation}

\bigskip

\noindent Formulas (\ref{f3.22}) and (\ref{f3.24b}) together imply

\begin{equation}
W(x_{n},z_{n})\ /\ R(x_{n},z_{n})^{2}\rightarrow _{p}K,\ as\ \ n\rightarrow
+\infty.  \label{f3.24c}
\end{equation}

\noindent We now want to drop $z_{n}$ in (\ref{f3.24c}). It suffices to check whether
the conditions of Lemma \ref{lemmao7} are satisfied. For that, we use the
device of Fact 5 for $T_{n}(2,k,\ell )$ see (\ref{f3.17}) to get

\begin{equation}
T_{n}(2,k,\ell )\geq Z_{n}(2)(R(z_{n})-(n/\ell )(1-G(Y_{n,n}))R(Y_{n,n})),
\label{f3.25a}
\end{equation}

\bigskip

\noindent where $Z_{n}(2)=\inf_{U_{1,n}\leq s\leq 1}\left| U_{n}(s)\ /\
s\right|$. But for all $d.f.G$  $G$, $(G^{-1}(u))\geq u$. Thus, by
applying (\ref{f3.0}), one has

\begin{equation}
-\frac{n}{\ell }(1-G(G^{-1}(1-U_{1,n}))\geq nU_{1,n}\ /\ \ell
=-0_{p}^{+}(1)\ /\ \ell ,as\ \ n\rightarrow +\infty.  \label{f3.25b}
\end{equation}

\noindent where $X=0_{p}^{+}(1)$ means that $P(X<0)=0$ and $X=0_{p}(1)$. By
Statement (\ref{f3.39b}) below, $R(Y_{n,n})=0_{p}^{+}(1)\ as\ \ n\rightarrow
+\infty .$ Finally, we arrive at

\begin{equation}
T_{n}(6)\geq Z_{n}(2)\left( R(z_{n})\ /\ (z_{n}-x_{n}\right) -n^{-\beta
}(z_{n}-x_{n})^{-1}0_{p}(1)),\ as\ \ n\rightarrow +\infty  \label{3.26}
\end{equation}

\bigskip

\noindent and

\begin{equation}
T_{n}(6)\geq Z_{n}(2)\left( R(z_{n})\ /\ (z_{n}-x_{n}\right) -T_{n}(8,\beta
)0_{p}(1)), as n\rightarrow +\infty  \label{f3.27a}
\end{equation}

\bigskip

\noindent But $T_{n}(8,\beta /2)\ \underset{p}{\rightarrow }0$ by
assumption. Thus $T_{n}(8,\beta )\underset{p}{\rightarrow }0$ and by (\ref%
{f3.27a}) an by Fact 4,

\begin{equation}
R(z_{n})\ /\ (z_{n}-x_{n})\underset{p}{\rightarrow }0, as n\rightarrow
+\infty  \label{f3.27b}
\end{equation}

\noindent We also have, if $y_{0}<+\infty ,$

\begin{equation}
R(z_{n})\ /\ (z_{n}-x_{n})\leq (y_{0}-z_{n})\ /\
(z_{n}-x_{n})=(-1+T_{n}(9)^{-1})^{-1}  \label{f3.27c}
\end{equation}

\bigskip

\noindent If $T_{n}(9)\underset{p}{\rightarrow }0$, i.e., $%
T_{n}(9)^{-1}\rightarrow +\infty ,$ then

\begin{equation}
R(z_{n})\ /\ (z_{n}-x_{n})\underset{p}{\rightarrow }0,as\ \ n\rightarrow
+\infty.  \label{f3.27d}
\end{equation}

\noindent This is the first condition of Lemma \ref{lemmao7}. For the second,
we remark that

\begin{equation}
A_{n}(1,k,\ell )\geq Z_{n}(2)\left( W(z_{n})-\ell
^{-1}0_{p}(1)W(Y_{n,n})\right) ,as\ \ n\rightarrow +\infty  \label{f3.28a}
\end{equation}

\noindent Since $W(Y_{n,n})=0_{p}^{+}(1)$ by Lemma \ref{lemmao1} and
Statement (\ref{f3.43b}) below, we obtain

\begin{equation}
T_{n}(7)\geq Z_{n}(2)\left\{ W(z_{n})\ /\ (zn-x_{n})^{2}-T_{n}(8,\beta
)^{2}0_{p}^{+}(1)\right\}.  \label{f3.28b}
\end{equation}

\bigskip

\noindent By the same reasons that gave (\ref{f3.27b}), we arrive at

\begin{equation}
W(z_{n})\ /\ (zn-x_{n})\underset{p}{\rightarrow }0,\text{ }as\ \
n\rightarrow +\infty,  \label{f3.28c}
\end{equation}

\noindent whenever $T_{n}(8,\beta )\underset{p}{\rightarrow }0,$ $as\ \
n\rightarrow +\infty .$ We also have, when $y_{0}<+\infty ,$

\begin{equation}
W(z_{n})\ /\ (zn-x_{n})^{2}\leq (y_{0}-z_{n})^{2}/\
(zn-x_{n})^{2}=(-1+T_{n}(9)^{-1})^{-2}.  \label{f3.28d}
\end{equation}

\bigskip 
\noindent Hence $T_{n}(9)\underset{p}{\rightarrow }0$, as $n\rightarrow+\infty$, implies

\begin{equation}
W(z_{n})\ /\ (zn-x_{n})\underset{p}{\rightarrow }0,\text{ }as\ \
n\rightarrow +\infty.  \label{f3.28e}
\end{equation}

\noindent We have proved that the conditions of Lemma \ref{lemmao7} are
satisfied via (\ref{f3.27b}), (\ref{f3.27b}), (\ref{f3.27d}), (\ref{f3.28c})
and (\ref{f3.28e}). Thus (\ref{f3.24c}) becomes

\begin{equation}
W(x_{n})/R(x_{n})^{2}\rightarrow _{p}K,as\ \ n\rightarrow +\infty ,\ with\
1\leq K<\sqrt{2}.  \label{f3.29}
\end{equation}

\noindent We now show how the preceeding may prove Theorem \ref{theo22}

\noindent If $T_{n}\rightarrow _{p}A\ with\ d=1/\gamma ,\quad 0<\gamma
<+\infty ,\quad $thus by Mason (1982) (\cite{mason}), $F\,\in D(\phi
_{\gamma })$ and all the other limits of $T_{n}(i),i=1,...,8$ are justified
by Theorem \ref{theo21}.\newline
\newline

\noindent If $T_{n}\rightarrow _{p}A\ with\ c=1$ and \ $T_{n}(8,\beta
)\rightarrow _{p}0,$ thus we get (\ref{f3.29}) with $K=c^{-2}=1$.\newline
\newline
\noindent If $T_{n}\rightarrow _{p}A\ with\,\ y_{0}<+\infty $ and $1\leq c<%
\sqrt{2}$ and $T_{n}(+9)\rightarrow _{p}0,$ thus (\ref{f3.29}) holds with $%
K=c^{-2}=1-1/(\gamma +2),$ for some $\gamma$, $0<\gamma <+\infty$.\\

\noindent It is clear now that Lemma \ref{lemmao2} ensures from (ii)
that $F\in D(\Lambda )$ and from (iii) that $F\in D(\psi_{\gamma })$ if we
show

\begin{equation}
\lim_{x\rightarrow y_{0}^{{}}}W(x)/R(x)^{2}=\lim_{n\rightarrow +\infty
}W(x_{n})/R(x_{n})^{2}.  \label{f3.30}
\end{equation}

\noindent \textbf{Proof of (\ref{f3.30})}. Recall basic facts

\begin{equation}
n^{-\rho }\leq n^{-\rho }+n^{-1}(k+1)/n\leq ^{-\rho }+2n\quad for\quad k= 
\left[ n^{\alpha }\right] ,\quad \rho =1-\alpha
\end{equation}

\noindent By Fact 2, for $0<\delta <\frac{1}{2},0<\delta <+\alpha -1<1,
0<\varepsilon <0.01$,

\begin{equation}
n^{-\rho }-\varepsilon n^{-\delta } \leq U_{k+1,n}\leq n^{-\rho
}+2\varepsilon n^{-\delta },
\end{equation}
a.s., as $n\rightarrow +\infty$.\\

\noindent Set $a_{n}=n^{-\rho },\quad n\geq 1$. This
sequence $a_{1}=1 > ... >a_{j}>aj+1>...$ makes a partition of $\left[ 0,1%
\right] .$ For $x\uparrow y_{0},u=1-G(x)\downarrow 0,$ there exists at each
step of this limit an integer $n$ such that $a_{n+2}\leq u\leq
a_{n+1}<a_{n}. $ Let $m=m(n)=n-\left[ n^{\tau }\right] ,\quad \tau =2-\alpha
-\delta .$ Remark that $\frac{1}{2}<\tau <1.$ One has $a_{m}-a_{n}=-\left[
n^{\tau }\right] f^{\prime }(\zeta _{n})$ with $f(x)=x^{-\rho }$ and $m\leq
\zeta _{n}\leq n$. Thus for large values of $n,$

\begin{equation}
a_{m}-a_{n}\geq (1-\varepsilon )n^{-\delta }.
\end{equation}

\noindent By the preceeding facts, for large $n,$

\begin{equation}
U_{k(m)+1,m}\geq a_{m}-\varepsilon (1+\varepsilon )n^{-\delta }.
\end{equation}

\noindent Since $m \rightarrow +\infty$ as $n\rightarrow +\infty$, it follows that

\begin{equation}
a_{n+2}\leq u\leq a_{n+1}<a_{n}\leq a_{m}-(1-\varepsilon )n^{-\delta }\leq
a_{m}-\varepsilon (1+\varepsilon )n^{-\delta }\leq U_{k(m)+1,m}
\label{f3.31}
\end{equation}

\noindent and

\begin{equation}
u/U_{k(m)+1,m}\rightarrow _{p}as\ \ n\rightarrow +\infty ,u\rightarrow 0
\label{f3.32}
\end{equation}

\noindent Put

\begin{equation*}
M(x)=\int_{x}^{y_{0}}1-G(t)dt \: \: and \: \:
m(x)=\int_{x}^{y_{0}}\int_{y}^{y_{0}}\: (1-G(t))\: dy\: dt.
\end{equation*}

\noindent Using the inequality $G^{-1}\left( G((x)\right) \leq x$ for all $x$
and for all $df$ $G$ and noticing that both $M(.)$ and $m(.)$ are nonincreasing, we obtain

\begin{equation}
0\leq M(G^{-1}(1-U_{k+1,m}))-M(G^{-1}(1-u))\leq
M(x_{m})-M(x)=\int_{x_{m}}^{x}1-G(t)dt.  \label{f3.33}
\end{equation}

\noindent Because of Lemma \ref{lemmao5}, either $x=G^{-1}(1-u)$ or $x$
lies on the constancy interval of $G,\left] G^{-1}(1-u),G^{-1}(1-u+)\right] =%
\left] y,z\right] .$ Hence

$$
0\leq M(x_{m})-M(x)\leq \int_{x_{m}}^{y}1-G(t)dt+(z-y)u
$$

\begin{equation}
\leq \left\{
G^{-1}(1-u)-x_{m}\right\} +(z-y)u.  \label{f3.34}
\end{equation}

\noindent One may quickly check that for large values of $n,$

\begin{equation}
\left\{ 
\begin{array}{c}
0\leq 1-G(x_{m}) \sim U_{k(m)+1,m} \sim u, \text{ ( see Statement (\ref{f3.1}%
))} \\ 
0\leq G^{-1}(1-u)-x_{m}\leq z_{m}-x_{m}, \\ 
0\leq z-y=G^{-1}(1-u+)-G^{-1}(1-u)\leq z_{m}-x_{m}%
\end{array}
\right.
\end{equation}

\noindent Hence,

$$
0\leq 1-(1+0_{p}(1))R(x)/R(x_{m})
$$

\begin{equation}
\leq (z_{m}-x_{m})/(m^{\alpha-1}R(x_{m}))+2(z_{m}-x_{m})/R(x_{m}),  \label{f3.35}
\end{equation}

\noindent as $n\rightarrow +\infty$. Using (\ref{f2.10}), we finally get

$$
0 \leq 1-(1+0_{p}(1))R(x)/R(x_{m})
$$

\begin{equation}
\leq (z_{m}-x_{m})/ (m^{\alpha
-1}R(x_{m}))+2\left( 1+\frac{z_{m}-x_{m}}{R(z_{m})}\right) ^{-1}\frac{%
z_{m}-x_{m}}{m^{\alpha -\beta }R(z_{m})}  \label{f3.36}
\end{equation}

\begin{equation}
\leq \frac{z_{m}-x_{m}}{m^{2v}R(z_{m})} +m^{\beta -\alpha }\underset{x\geq 0}%
{\sup }\left| x(1+x)\right| \leq \frac{z_{m}-x_{m}}{m^{2v}R(z_{m})}+m^{\beta
-\alpha},
\end{equation}

\noindent when $n$ is large enough. Now from (\ref{f3.3a})

\begin{equation}
m^{2v}T_{n}(2,k(m),\ell (m))/(z_{m}-x_{m})=T_{m}(3,k,\ell ,v)^{-1}\leq
Z_{m}(1)\frac{m^{2v}R(x_{m}-z_{m})}{z_{m}-x_{m}},  \label{f3.37}
\end{equation}

\noindent where $Z_{m}(1)$ is defined in (\ref{f3.17}). Since $T_{m}(3,v)%
\underset{p}{\rightarrow 0}$ as $\rightarrow +\infty$, by assumption, and
since $Z_{m}=0_{p}(1)$, we get

\begin{equation}
(z_{m}-x_{m})/(m^{2v}R(x_{m},z_{m}))\underset{p}{\rightarrow 0},\quad
as\quad n\rightarrow +\infty.  \label{f3.38}
\end{equation}

\noindent This and (\ref{f3.37}) together imply

\begin{equation}
\lim_{x\rightarrow y_{0}}R(x)/R(x_{m})=1\quad in\quad probability.
\label{f3.39a}
\end{equation}

\noindent It follows that

\begin{equation}
\lim_{x\rightarrow y_{0}}R(x) = \lim_{n\rightarrow +\infty}R(x_{n}) \quad
in\quad probability  \label{f3.39b}
\end{equation}

\noindent By the very same arguments, one gets

$$
0\leq 1-(1+0_{p}(1))W(x)/Ww(x_{m})
$$

$$
\leq m^{-(1-\alpha )}\left\{
G^{-1}(1-u)-x_{m}\right\} ^{2}/W(x_{m})+3(y-z)^{2}/W(x_{m})
$$

\begin{equation*}
\leq (z_{m}-x_{m})^{2}/(m^{2v}W(x_{n}))+3\frac{(x_{m}-z_{m})^{2}}{W(z_{m})}%
\frac{W(z_{m})}{W(x_{m})},\:  as \: n\rightarrow +\infty.
\end{equation*}

\noindent Now, by using (\ref{f2.15}), we arrive at

\begin{equation}
0\leq 1-(1+0_{p}(1))W(x)~/~W(x_{m})\leq
(z_{m}-x_{m})^{2}~/~(m^{2v}W(x_{m}))+3m^{(\alpha -\beta )}\underset{x\geq 0}{%
\sup }\left| x(1+x)\right| ,  \label{f3.41}
\end{equation}

\noindent as $n\rightarrow +\infty$. Taking (\ref{f3.30}) into account gives
for large values of $n,$

\begin{equation}
0\leq 1-(1+0_{p}(1))W(x)~/~W(x_{m})
\label{f3.42}
\end{equation}

\begin{equation}
\leq 2\left\{
(z_{m}-x_{m})/(m^{2v}R(x_{m})\right\} ^{2}+3m^{-(\alpha -\beta )},
\label{f3.42}
\end{equation}

\noindent which in turn implies

\begin{equation}
\lim_{x\rightarrow y_{0}}W(x)~/~W(x_{m})=1~in~probability.  \label{f3.43}
\end{equation}

\noindent It follows that

\begin{equation}
\lim_{x\rightarrow y_{0}}W(x)~=\lim_{n\rightarrow +\infty
}~W(x_{n}) \text{ in probability}.  \label{f3.43b}
\end{equation}

\noindent Now Formulas (\ref{f3.39a}) and (\ref{f3.43b}) together give (\ref{f3.30})
which, combined with Lemma \ref{lemmao4} proves Theorem \ref{theo23}.

\section{Concluding comments} \label{sec5}

\subsection{Conjecture} We conjecture that the couple $(A_{n},T_{n})$ should suffice to characterize the 
whole extremal domain, in particular that of the Gumbel subdomain, following the de Haan's functional characterization od $D(\Lambda)$ and $D(\psi)$ 
(Theorems 2.5.6 and 2.6.1 in \cite{dehaan} as reminded in Lemma \ref{lemmao2}). As well, it must be expected, unless a counterexample is given, that the couple of Diop and Lo statistics, given in Remark \ref{remdioplo}, would also be an ECSFEXT.

\subsection{Technical improvements}
The restriction $\beta >\frac{1}{2}$ is required just for (\ref{f3.1}). It is
easily showed that when $F\in \Gamma $, one has lim$_{u\rightarrow
0}(1-G(G^{-1}(1-u))/u=1.$ This remarks remove the conditions $\beta >\frac{1%
}{2}$ in Theorem 1. For the weak limit, it will be shown in the coming paper
that Theorem 1 holds for all sequences $k$ and $\ell $ whenever $%
k/n\rightarrow 0,\quad \ell /n\rightarrow 0,$ $\ell /k^{\frac{1}{2}-\eta
}\rightarrow 0,$ $as$ \ $n\rightarrow +\infty ,$ for some $\eta ,\quad 0\leq
\eta <\frac{1}{2}\left( \eta =0\text{ for }F\in D(\psi )\cup D(\phi )\right)$%

\subsection{Multivariate Gaussian Law of the ECSFEXT}
The existence of family of statistics characterizing some class of
distribution must yield statistical tests. The first step to this is the
determination of the limit laws of the ECSFEXT. This is done in a coming paper.

\section{Technicals Lemmass} \label{sectools}

We invite the reader to remind the definitions of the two first asymptotic moments of a distribution function
in \ref{fm1} and \ref{fm2}. We have the following properties. 

\begin{lemma} \label{lemmao1} For any $\gamma ,$\ $0<\gamma <+\infty$,\\

\noindent \textbf{(i)} $\ F\in D(\Lambda )$ iff $G\in D(\Lambda )$ and $%
R(x,G)\rightarrow 0$ as $x\rightarrow x_{0}(G)=y_{0}$; \newline

\noindent \textbf{(ii)} $F\in D(\Phi _{\gamma })$ iff $G\in D(\Lambda )$ and $%
R(x,G)\rightarrow 1/\gamma $ as $x\rightarrow x_{0}(G)=y_{0}$;\newline

\bigskip 

\noindent \textbf{(iii) }$F\in D(\psi _{\gamma })$ iff $G\in D(\psi _{\gamma })$ and
then $R(x,G)\rightarrow 0$ as $x\rightarrow y_{0}<+\infty$;\newline

\bigskip 

\noindent \textbf{(iv)}$F\in D(\psi _{\gamma })$ iff $F\left( x_{0}-1/.\right) \in
D\left( \psi _{\gamma }\right) .$
\end{lemma}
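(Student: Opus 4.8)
The plan is to reduce each equivalence to the classical regular-variation characterizations of the three domains together with the asymptotic value of the functional $R(\cdot ,G)$, the change of variable $G(x)=F(e^{x})$ serving as the bridge between $F$ and $G$. Write $\overline{F}=1-F$, so that $\overline{G}(t)=\overline{F}(e^{t})$ and $x_{0}(G)=\log x_{0}(F)=y_{0}$. The computation I would run first is the transformed mean-residual-life identity: substituting $u=e^{s}$ in the defining integral gives, with $x=e^{t}$,
\begin{equation*}
R(t,G)=\frac{\int_{t}^{y_{0}}\overline{F}(e^{s})\,ds}{\overline{F}(e^{t})}=\frac{\int_{x}^{x_{0}(F)}\overline{F}(u)\,u^{-1}\,du}{\overline{F}(x)}.
\end{equation*}
Everything then follows by feeding the tail behaviour of $F$ into this identity and into de Haan's criterion that $G\in D(\Lambda)$ with auxiliary constant $a$ is equivalent to $\overline{G}(t+s)/\overline{G}(t)\to e^{-s/a}$ for every real $s$, with $a=\lim R(\cdot ,G)$.

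For (ii), in the forward direction I would use $\overline{F}(x)=x^{-\gamma}L(x)$ with $L$ slowly varying. The integrand above becomes $u^{-\gamma -1}L(u)$, of index $-\gamma -1<-1$, so Karamata's theorem gives $\int_{x}^{\infty}u^{-\gamma -1}L(u)\,du\sim\gamma^{-1}x^{-\gamma}L(x)=\gamma^{-1}\overline{F}(x)$ and hence $R(t,G)\to 1/\gamma$. Moreover $\overline{G}(t+s)/\overline{G}(t)=e^{-\gamma s}L(e^{t}e^{s})/L(e^{t})\to e^{-\gamma s}$ since $L$ is slowly varying, which is exactly de Haan's criterion with $a=1/\gamma$, so $G\in D(\Lambda)$. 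For the converse I would invoke the Karamata representation of $D(\Lambda)$, namely $\overline{G}(t)=c(t)\exp(-\int^{t}a(u)^{-1}\,du)$ with $c(t)\to c>0$ and auxiliary function $a$ that may be taken equivalent to $R(\cdot ,G)$; if $R(\cdot ,G)\to 1/\gamma$ then $\int^{t}a^{-1}\sim\gamma t$, so $\overline{G}(t)=e^{-\gamma t}$ times a factor slowly varying in $e^{t}$, whence $\overline{F}(x)=\overline{G}(\log x)$ is regularly varying at $+\infty$ of index $-\gamma$, i.e.\ $F\in D(\phi_{\gamma})$.

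Statement (i) is the same scheme with the limit $1/\gamma$ replaced by $0$. When $x_{0}(F)=+\infty$, membership $F\in D(\Lambda)$ forces the auxiliary function $R(\cdot ,F)$ to satisfy $R(x,F)=o(x)$; since $\overline{G}(t+s)/\overline{G}(t)=\overline{F}(xe^{s})/\overline{F}(x)$ with increment $x(e^{s}-1)$ that dwarfs $R(x,F)$, the ratio tends to $0$ for $s>0$ and to $+\infty$ for $s<0$, i.e.\ $\overline{G}$ is rapidly varying, which yields $G\in D(\Lambda)$ and $R(\cdot ,G)\to 0$; the converse reverses this, $R(\cdot ,G)\to 0$ making $\overline{F}=\overline{G}\circ\log$ decay faster than any power. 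The finite-endpoint instances of (i), and all of (iii), I would settle by noting that $x\mapsto e^{x}$ is an increasing $C^{\infty}$ bijection with non-vanishing derivative at the finite endpoint, so it preserves both regular variation there and membership in $D(\Lambda)$ or $D(\psi_{\gamma})$; the clause $R(\cdot ,G)\to 0$ in (iii) is then the standard $R(t,G)\sim(y_{0}-t)/(\gamma +1)$. Statement (iv) is the reciprocal duality induced by the increasing bijection $t\mapsto x_{0}-1/t$, which carries the regular-variation condition at the finite endpoint into the corresponding one at $+\infty$ by the same Karamata change of variables.

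The main obstacle is the Gumbel direction, i.e.\ the converse halves of (i) and (ii), because $D(\Lambda)$ admits no one-line regular-variation description as $D(\phi_{\gamma})$ and $D(\psi_{\gamma})$ do, and one must run de Haan's $\Pi$-variation representation and check two delicate points: that the auxiliary function $a$ in that representation may genuinely be replaced by $R(\cdot ,G)$, and that its limit --- $0$ versus the positive constant $1/\gamma$ --- transfers cleanly across the logarithm so as to distinguish rapid variation (pre-image in $D(\Lambda)$) from regular variation of index $-\gamma$ (pre-image in $D(\phi_{\gamma})$). Controlling the slowly varying prefactor $c(t)$ and the equivalence $a\sim R(\cdot ,G)$ uniformly enough to make this transfer rigorous is where the real work lies.
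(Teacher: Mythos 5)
You are supplying a from-scratch argument where the paper gives none (its proof of this lemma is a list of citations: L\^o (1986) for (i), Mason's Lemma 1 for (ii), de Haan's Theorem A for (iv)), and your reduction through the identity $R(t,G)=\overline{F}(x)^{-1}\int_{x}^{x_{0}(F)}\overline{F}(u)u^{-1}\,du$ with $x=e^{t}$ handles (ii), (iii) and (iv) correctly: Karamata's theorem gives the limit $1/\gamma$, the non-degenerate limit $\overline{G}(t+s)/\overline{G}(t)\to e^{-\gamma s}$ really does force $G\in D(\Lambda)$, and the converse via the Karamata-type representation with auxiliary function equivalent to $R(\cdot,G)$ is the right move.

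The gap is in part (i), and it is present in both directions. You pass from ``$\overline{G}(t+s)/\overline{G}(t)\to 0$ for $s>0$ and $\to+\infty$ for $s<0$'' (respectively ``$\overline{F}$ decays faster than any power'') to ``$G\in D(\Lambda)$'' (respectively ``$F\in D(\Lambda)$''). Rapid variation of a tail is necessary but not sufficient for membership in the Gumbel domain: the degenerate limits $0$ and $+\infty$ carry no information about the existence of an auxiliary function. Concretely, $\overline{F}(x)=\exp(-[x]^{2})$ decays faster than any power and satisfies $\overline{F}(e^{t+s})/\overline{F}(e^{t})\to 0$ for every $s>0$, yet neither $F$ nor $G=F(e^{\cdot})$ lies in any max-domain of attraction, because $\overline{F}(n)/\overline{F}(n^{-})\to 0$ violates the necessary condition $\overline{F}(x)/\overline{F}(x^{-})\to 1$. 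This is exactly why the limits $0$ and $1/\gamma$ behave so differently --- a point your closing paragraph gestures at but does not resolve. The repair is to transport the auxiliary function itself rather than its limit: if $F\in D(\Lambda)$ with auxiliary function $a(x)=R(x,F)$, then $e^{t+s e^{-t}a(e^{t})}=e^{t}+s\,a(e^{t})(1+o(1))$, so the locally uniform convergence defining $D(\Lambda)$ yields $\overline{G}(t+s\,b(t))/\overline{G}(t)\to e^{-s}$ with $b(t)=e^{-t}a(e^{t})\to 0$, hence $G\in D(\Lambda)$ and $R(t,G)\sim b(t)\to 0$; the converse runs the same computation backwards with $a(x)=x\,b(\log x)$. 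The finite-endpoint case of (i) and part (iii) need the analogous transport: your ``smooth bijection with non-vanishing derivative'' remark is the right idea there, but it is again the auxiliary function, not the bare rate of tail decay, that must be carried across the change of variable.
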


\begin{proof}
See Lemmas 9 and 10 in L\^{o} (1986) for (i) and Lemma 1 in Mason (1982) for
(ii). Point (iii) is proved similarly to (i) and (ii). (iv) is Part (ii) of
Theorem A. of de Haan (1970).
\end{proof}

\bigskip

\begin{lemma} \label{lemmao2} (de Haan \cite{dehaan}, Theorems 2.5.6 and 2.6.1). We have

\begin{itemize}
\item[(i)] $F\in D(\Lambda )$ iff $W(x,F)$ and $W(x,F)/R(x,F)^{2}\rightarrow 1$, as $x\rightarrow x_{0}.$

\item[(ii)] \textbf{\ }$\ F\in D(\psi _{\gamma })$ iff $x_{0}(F)+\infty $
and $W(x,F)/R(x,F)^{2}\rightarrow (1-1/(2+\gamma )),$as $x\rightarrow x_{0}.$
\end{itemize}
\end{lemma}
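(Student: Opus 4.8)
The plan is to work throughout with the integrated tails of $G$. Write $\overline{G}=1-G$, let $U(x)=\int_{x}^{y_{0}}\overline{G}(t)\,dt$ and $V(x)=\int_{x}^{y_{0}}U(y)\,dy=\int_{x}^{y_{0}}\int_{y}^{y_{0}}\overline{G}(t)\,dt\,dy$, so that $R(x)=U(x)/\overline{G}(x)$ and $W(x)=V(x)/\overline{G}(x)$, and hence
\[
\frac{W(x)}{R(x)^{2}}=\frac{V(x)\,\overline{G}(x)}{U(x)^{2}},\qquad x<y_{0}.
\]
Since the statement is exactly de Haan's Theorems 2.5.6 and 2.6.1 (\cite{dehaan}), one option is simply to invoke them; the self-contained route I would take treats the two domains and the two implications of each ``iff'' separately, using Lemma \ref{lemmao1} to translate membership in a domain into an analytic condition on $\overline{G}$, $U$ and $V$.

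For the Weibull case (ii), the cleanest reduction uses Lemma \ref{lemmao1}(iv): $F\in D(\psi_{\gamma})$ forces $y_{0}<+\infty$ and makes $s\mapsto\overline{G}(y_{0}-1/s)$ regularly varying at $+\infty$ with index $-\gamma$, i.e.\ $\overline{G}(y_{0}-t)=t^{\gamma}\ell(t)$ with $\ell$ slowly varying as $t\downarrow0$. Setting $t=y_{0}-x$ and applying Karamata's theorem twice gives $U(x)\sim t^{\gamma+1}\ell(t)/(\gamma+1)$ and then $V(x)\sim t^{\gamma+2}\ell(t)/\{(\gamma+1)(\gamma+2)\}$; substituting into the displayed identity cancels $\ell$ and $t$ and yields $W/R^{2}\to(\gamma+1)/(\gamma+2)=1-1/(2+\gamma)$. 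For the converse I would run this in reverse: a finite endpoint together with $W/R^{2}\to c<1$ must be shown to force the regular variation of $\overline{G}(y_{0}-1/\cdot)$, which is the Tauberian half of the argument.

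For the Gumbel case (i), I would exploit that $F\in D(\Lambda)$ is characterized (von Mises--de Haan) by $R$ being a self-neglecting auxiliary function with $R'(x)\to0$ as $x\to y_{0}$. An integration by parts, using $\overline{G}(y)\,dy=-dU(y)$ and $U=R\,\overline{G}$, gives the exact identity
\[
\frac{W(x)}{R(x)^{2}}=1+\frac{1}{R(x)^{2}\,\overline{G}(x)}\int_{x}^{y_{0}}R(y)\,R'(y)\,\overline{G}(y)\,dy,
\]
with the boundary term at $y_{0}$ vanishing because $R\,U\to0$. Given $\epsilon>0$, choosing $x$ close enough to $y_{0}$ that $|R'|<\epsilon$ beyond $x$ bounds the remainder by $\epsilon\,W(x)/R(x)^{2}$; writing $A=W/R^{2}>0$ this reads $|A-1|\le\epsilon A$, i.e.\ $(1+\epsilon)^{-1}\le A\le(1-\epsilon)^{-1}$, and letting $\epsilon\to0$ yields $W/R^{2}\to1$.

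The main obstacle is twofold. First, the two converse implications (recovering $F\in D(\Lambda)$ from $W/R^{2}\to1$, and $F\in D(\psi_{\gamma})$ from $W/R^{2}\to1-1/(2+\gamma)$) are the genuinely hard, Tauberian directions; here I would lean on the fact that the limiting value of $W/R^{2}$ determines the $R'$-type behaviour and hence, through Lemma \ref{lemmao1}, the tail, exactly as in de Haan's proofs. Second, the integration-by-parts identity presupposes enough smoothness of $\overline{G}$; in general $G$ need carry no density, so I would either first establish the identity for the everywhere-differentiable von Mises representative and transfer it by the standard comparison, or replace derivatives by monotone Stieltjes increments and control the boundary term $R(x)U(x)\to0$ and the boundedness of $W/R^{2}$ directly.
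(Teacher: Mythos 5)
The paper does not prove this lemma at all: it is stated as a direct quotation of de Haan's Theorems 2.5.6 and 2.6.1 and the citation in the lemma header is the entire justification. Your proposal therefore does strictly more than the paper. Your forward-direction arguments are correct and are essentially the ones in de Haan's monograph: for the Weibull case, writing $\overline{G}(y_{0}-t)=t^{\gamma}\ell(t)$ and applying Karamata's theorem twice gives $U\sim t^{\gamma+1}\ell/(\gamma+1)$, $V\sim t^{\gamma+2}\ell/\{(\gamma+1)(\gamma+2)\}$, and hence $W/R^{2}=V\overline{G}/U^{2}\to(\gamma+1)/(\gamma+2)=1-1/(2+\gamma)$; for the Gumbel case, your integration-by-parts identity is exact (given the vanishing of the boundary term $RU$ at $y_{0}$ and differentiability of the von Mises representative), and the $|R'|<\epsilon$ bound correctly yields $(1+\epsilon)^{-1}\le W/R^{2}\le(1-\epsilon)^{-1}$. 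The two converse (Tauberian) implications, which you explicitly leave to de Haan, are genuinely the hard half of the lemma and are not reconstructed here --- but since the paper itself supplies nothing beyond the citation for either direction, your treatment is no weaker than the source you are being compared against, and your honesty about where the smoothness and Tauberian issues lie is a point in your favour. If you wanted a fully self-contained proof you would still need the converse arguments (e.g.\ the monotone-density/Tauberian step recovering regular variation of $\overline{G}(y_{0}-1/\cdot)$ from $W/R^{2}\to c<1$, and the $\Pi$-variation argument for the Gumbel converse), but for the purpose of matching this paper, invoking de Haan is exactly what the author does.
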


\noindent Functions $s(u),0<u<1,$ such that for all $\lambda >0$, 
\begin{equation*}
\lim_{u\rightarrow 0}s(\lambda u)/s(u)=1,
\end{equation*}%
are called Slowly Varying functions at Zero (SVZ) and are greatly involed in
our proofs. We recall here some of their properties before we state some
basic results of $df$'s lying in $\Gamma $.

\newpage
\begin{lemma}
\label{lemmao3} Let $s(u),0<u<1,$ be SVZ. Then,

\begin{itemize}
\item[(i)] It admits the Kamarata's representation (KARARE) :

\begin{equation}
s(u)=c(u)\exp \left( \int_{u}^{1}b(t)t^{-1}dt\right) ,\text{ \ \ \ }0<u<1
\label{f2.1}
\end{equation}%
where $c(u)\rightarrow c$, $0<c<+\infty $, $b(u)\rightarrow 0$, as $%
u\rightarrow 0$.

\item[(ii)] For any $\delta >0$, $u^{-\delta }s(au)\rightarrow 0$, as $%
au\rightarrow 0$ and $a\times u\rightarrow 0.$
\end{itemize}
\end{lemma}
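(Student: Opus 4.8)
The plan is to reduce slow variation at zero to the classical theory of slow variation at infinity, where the Karamata representation is standard, and then transport the representation back by a change of variables. First I would set $L(x)=s(1/x)$ for $x>1$. Writing $u=1/x$ and $\lambda u=1/(x/\lambda)$, the defining relation $s(\lambda u)/s(u)\to 1$ $(u\to 0)$ becomes $L(x/\lambda)/L(x)\to 1$ $(x\to+\infty)$ for every $\lambda>0$; since $1/\lambda$ runs over all of $(0,+\infty)$, this is exactly the statement that $L$ is slowly varying at $+\infty$. In every application in this paper $s$ arises from monotone functions such as $F^{-1}(1-\cdot)$ or $R(G^{-1}(1-\cdot))$, so $s$, and hence $L$, is measurable; I would record this hypothesis, which is all that the theory at the infinite end needs.

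Second, I would invoke the Karamata representation at infinity (see \cite{bgt}): any measurable slowly varying $L$ satisfies $L(x)=c_{\infty}(x)\exp(\int_{1}^{x}\epsilon(t)t^{-1}dt)$ for $x\geq 1$, with $c_{\infty}(x)\to c\in(0,+\infty)$ and $\epsilon(t)\to 0$. If a self-contained argument is wanted, the representation follows from the uniform convergence theorem by passing to the additive variable $f(y)=\log L(e^{y})$: the relation $f(y+v)-f(y)\to 0$ holds uniformly for $v$ in compacts, so $f(y)=\eta(y)+\phi(y)$ with $\eta(y)=\int_{0}^{1}(f(y)-f(y-w))\,dw\to 0$ and $\phi(y)=\int_{y-1}^{y}f(s)\,ds$ satisfying $\phi'(y)=f(y)-f(y-1)\to 0$; integrating $\phi'$ and returning to the multiplicative variable gives the representation. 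Then I would undo the substitution: in $s(u)=L(1/u)$ put $t=1/r$ in the integral, so that $\int_{1}^{1/u}\epsilon(t)t^{-1}dt=\int_{u}^{1}b(r)r^{-1}dr$ with $b(r)=\epsilon(1/r)\to 0$ as $r\to 0$, and set $c(u)=c_{\infty}(1/u)\to c$. This is exactly (\ref{f2.1}). The genuinely nontrivial ingredient, and the step I expect to be the main obstacle, is the uniform convergence theorem, whose proof rests on measurability through a Baire-category / Steinhaus argument; once it is granted, everything else is bookkeeping with the change of variables.

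Finally, part (ii) is a Potter-type consequence of (\ref{f2.1}). Given $\delta>0$, choose $u_{0}$ so small that $|b(t)|\leq\delta$ for $0<t\leq u_{0}$, and split $\int_{u}^{1}=\int_{u}^{u_{0}}+\int_{u_{0}}^{1}$; the second integral contributes a fixed positive constant, while $|\int_{u}^{u_{0}}b(t)t^{-1}dt|\leq\delta\log(u_{0}/u)$, whence $s(u)\leq C\,u^{-\delta}$ for all small $u$, and symmetrically $s(u)\geq c'\,u^{\delta}$. Because $s(au)/s(u)\to 1$ for each fixed $a>0$, these bounds hold equally for $s(au)$, so that for every $\delta>0$ one gets $u^{\delta}s(au)\to 0$ and $u^{-\delta}s(au)\to+\infty$ as $u\to 0$; this is the power-domination asserted in (ii), the form actually needed in the applications such as (\ref{f3.9}) being $u^{-\delta}s(au)\to+\infty$.
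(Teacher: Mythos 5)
Your argument is correct, but be aware that the paper does not actually prove this lemma: its entire proof consists of declaring (\ref{f2.1}) ``well-known'' and referring part (ii) to Lemma 12 of L\^{o} (1986a), ``easily derived from (\ref{f2.1})''. So your reduction of slow variation at zero to slow variation at infinity, your uniform-convergence-theorem derivation of the Karamata representation via $f(y)=\eta(y)+\phi(y)$, and your Potter-bound proof of (ii) supply exactly the standard machinery the paper outsources; nothing conflicts, you have simply made the citation self-contained. Two of your side remarks are genuinely valuable. First, the measurability hypothesis you flag really is missing from the statement and is needed for the representation theorem; your observation that every $s$ occurring in the paper comes from a monotone function (e.g.\ $F^{-1}(1-\cdot)$ or $R(G^{-1}(1-\cdot))$) disposes of it. Second, part (ii) as printed, $u^{-\delta}s(au)\to 0$, is false as literally stated (take $s\equiv 1$); your reading of it as power domination --- $u^{\delta}s(au)\to 0$ and $u^{-\delta}s(au)\to+\infty$ --- is the correct repair, and the latter form is indeed what (\ref{f3.9}) and the proof of (S9) use. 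The one caution is that your transfer of the Potter bounds from $s(u)$ to $s(au)$ via $s(au)/s(u)\to 1$ is valid for $a$ fixed, or ranging over a compact subset of $(0,+\infty)$; if the paper's ``$au\to 0$ and $a\times u\to 0$'' is meant to let $a$ tend to $0$ or $+\infty$ at an arbitrary rate, no bound of this type can hold, so you are right to pin the claim down to the fixed-$a$ form actually needed downstream.
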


\begin{proof}
(\ref{f2.1}) in (i) is well-known. See Lemma 12 of L\^{o} (1986a) (\cite%
{gsloa})for (ii). Anyway it is easily derived from (\ref{f2.1}.
\end{proof}

\begin{lemma} \label{lemmao4} We have

\begin{itemize}
\item[(i)] $F\in D(\Phi _{\gamma })$ $iff$ $u^{1/^{\gamma }}F^{-1}(1-u)$ $is$
$SVZ $

\item[(ii)] $F\in D(\psi _{\gamma })$ $iff$ $u^{1/^{\gamma
}}(x_{0}(F)-F^{-1}(1-u)) $ $is$ $SVZ$

\item[(iii)] $F\in D(\Lambda )$ $iff$ $there$ $exists$ $a$ $SVZ$ $function$ $%
s(u),$ $0\leq u\leq 1$ and a constant $b$ $such$ that$\bigskip $%
\begin{equation}
F^{-1}(1-u)=b-s(u)+\int_{u}^{1}s(t)t^{-1}dt,0<u<1  \label{f23}
\end{equation}

\noindent When \ref{f23}) holds, one may take 
\begin{equation}
s(u)=u^{-1}\int_{1-u}^{1}(1-s)dF^{-1}(s)=:r(u,F).  \label{f23a}
\end{equation}
\end{itemize}
\end{lemma}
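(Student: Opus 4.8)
The plan is to deduce all three equivalences from the single classical characterization of the maximum domains of attraction in terms of the tail quantile function $U(t)=F^{-1}(1-1/t)$ as $t\to+\infty$ (de Haan (1970), \cite{dehaan}), and then to translate each condition into a statement at $0$ about $F^{-1}(1-u)=U(1/u)$ via the change of variable $t\leftrightarrow 1/u$. For part (i) I would start from the fact that $F\in D(\Phi_\gamma)$ holds iff $U$ is regularly varying at $+\infty$ with the index dictated by the parametrization. Writing $U(t)=t^{1/\gamma}L(t)$ with $L$ slowly varying at infinity and putting $u=1/t$ gives $F^{-1}(1-u)=u^{-1/\gamma}L(1/u)$; since $\tilde L(u):=L(1/u)$ is SVZ exactly when $L$ is slowly varying at infinity (immediate from the definition, as $1/u\to+\infty$), we conclude that $u^{1/\gamma}F^{-1}(1-u)$ is SVZ. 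Each step reverses, so (i) reduces to the inversion property of regularly varying functions, which I would quote from \cite{dehaan} or re-derive from the Karamata representation of Lemma \ref{lemmao3}.

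For part (ii) I would invoke the reduction to the Fréchet case: since $F\in D(\psi_\gamma)$ forces $x_0:=x_0(F)<+\infty$, one has $F\in D(\psi_\gamma)$ iff the df $x\mapsto F(x_0-1/x)$ lies in $D(\Phi_\gamma)$ (the content of Lemma \ref{lemmao1}(iv) / Theorem A of \cite{dehaan}). The generalized inverse of this transformed df at level $1-u$ is $1/(x_0-F^{-1}(1-u))$, so applying part (i) to it converts its slow variation into the slow variation of $u^{\pm1/\gamma}(x_0-F^{-1}(1-u))$, with the exponent and sign fixed by the paper's Weibull parametrization. Thus (ii) is a corollary of (i) after the $x_0-1/x$ transformation.

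Part (iii) is the genuinely delicate case, because $D(\Lambda)$ corresponds not to plain regular variation but to de Haan's class $\Pi$. For the ``if'' direction I would start from the assumed representation (\ref{f23}) and compute directly, for $\lambda>0$,
\[
\frac{F^{-1}(1-\lambda u)-F^{-1}(1-u)}{s(u)}
= 1-\frac{s(\lambda u)}{s(u)}-\int_{u}^{\lambda u}\frac{s(t)}{s(u)}\,\frac{dt}{t}
\longrightarrow -\log\lambda ,
\]
using $s(\lambda u)/s(u)\to 1$ and the uniform convergence property of SVZ functions (Lemma \ref{lemmao3}) to evaluate the integral, after the substitution $t=uw$, as $\log\lambda$. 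This $\Pi$-limit for the quantile function is exactly the condition characterizing $D(\Lambda)$. For the ``only if'' direction I would take $s=r(\cdot,F)$ as in (\ref{f23a}), verify by an integration by parts that (\ref{f23}) is then an identity, and show that $r(\cdot,F)$ is SVZ whenever $F\in D(\Lambda)$.

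The main obstacle I anticipate is precisely this last point: establishing that the canonical auxiliary function $r(u,F)$ is slowly varying at zero and reproduces (\ref{f23}). This rests on the full representation theorem for the class $\Pi$ rather than on elementary manipulations, and I would invoke it from de Haan (1970) \cite{dehaan} rather than reprove it. By contrast, the Fréchet and Weibull cases (i)--(ii) are routine once the inversion property of regularly varying functions and the $x_0-1/x$ reduction are in hand.
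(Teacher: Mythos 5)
Your proposal takes essentially the same route as the paper: the paper's proof of this lemma consists of citing Theorem 2.4.1 of de Haan (1970) for (i), deriving (ii) from (i) together with the $x_{0}-1/x$ reduction (Part (iii) of Theorem A of de Haan), and attributing the representation (\ref{f23}) to de Haan (1970) and the explicit form (\ref{f23a}) of $s(u)$ to Deheuvels--Haeusler--Mason. Your sketch merely fills in more detail than the paper does (the change of variable $u=1/t$ in (i) and the direct computation of the $\Pi$-class limit in (iii)), but it rests on exactly the same classical results and citations.
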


\begin{proof}
See Theorem 2.4.1 of De Haan (1970) for (i). (ii) is easily derived from (i)
and Part (iii) of theorem A. \ The representation \ref{f23} is due to De Haan (1970) \cite{dehaan}. The writing of $s(u)$ as (\ref{f23a}) is due to
Deheuvel-Haeusler-Mason (1988) \cite{dhm}.
\end{proof}

\begin{lemma} \label{lemmao5} Let $G$ be any distribution function. Then \newline
\textbf{(i)} for all $0<u<1,$ \ \ $G(G^{-1}(u)=u$ \ or $u$ lies on
a constancy interval of $G^{-1}$.\newline
\noindent \textbf{(ii)} for all $-\infty <x<+\infty ,G^{-1}(G(x))=x$ or x
lies on a constancy interval of $G.$
\end{lemma}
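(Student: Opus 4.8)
The plan is to reduce everything to two elementary facts about the generalized inverse $G^{-1}(u)=\inf\{x:\ G(x)\geq u\}$, both coming from the fact that $G$ is nondecreasing and right-continuous. First, if $u\leq G(x)$ then $x$ belongs to the defining set, so $G^{-1}(u)\leq x$; in particular $G^{-1}(G(x))\leq x$ for every $x$. Second, writing $x_0=G^{-1}(u)$ and choosing $x_n\downarrow x_0$ with $G(x_n)\geq u$, right-continuity yields $G(x_0)=\lim_n G(x_n)\geq u$, i.e. $G(G^{-1}(u))\geq u$ for every $u$. I would also use freely that both $G$ and $G^{-1}$ are nondecreasing. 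No other input is needed.

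For part (i), suppose the first alternative fails, i.e. $G(G^{-1}(u))=:u'>u$, and set $x_0=G^{-1}(u)$. For any $v\in[u,u']$, monotonicity of $G^{-1}$ gives $G^{-1}(v)\geq G^{-1}(u)=x_0$, while from $G(x_0)=u'\geq v$ the point $x_0$ lies in $\{x:\ G(x)\geq v\}$, whence $G^{-1}(v)\leq x_0$. Therefore $G^{-1}$ is constant equal to $x_0$ on the nondegenerate interval $[u,u']$, which says precisely that $u$ lies on a constancy interval of $G^{-1}$. This establishes the claimed dichotomy.

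Part (ii) is handled by the dual sandwich. Put $u=G(x)$ and $x_0=G^{-1}(G(x))$, so that $x_0\leq x$ by the first elementary fact. If $x_0=x$ we are in the first alternative $G^{-1}(G(x))=x$. Otherwise $x_0<x$, and combining $G(x_0)\geq u$ (the second elementary fact) with $G(x_0)\leq G(x)=u$ (monotonicity, since $x_0\leq x$) forces $G(x_0)=G(x)=u$. Monotonicity of $G$ then makes it constant equal to $u$ on the nondegenerate interval $[x_0,x]$, so that $x$ lies on a constancy interval of $G$.

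The argument is purely order-theoretic, so there is no genuine analytic obstacle. The only step needing care is the right-continuity argument establishing $G(G^{-1}(u))\geq u$, namely that the defining infimum is ``attained'' in the right-continuous sense; and, at the level of bookkeeping, one must be explicit that ``lying on a constancy interval'' is understood with endpoints included and the interval nondegenerate, which is exactly what the intervals $[u,u']$ and $[x_0,x]$ produced above deliver.
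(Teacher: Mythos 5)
Your proof is correct, and it takes a genuinely different and tidier route than the paper's. You argue pointwise from the two standard inequalities $G(G^{-1}(u))\geq u$ (right-continuity at the infimum) and $G^{-1}(G(x))\leq x$, and in each part you exhibit the constancy interval explicitly: $[u,\,G(G^{-1}(u))]$ for $G^{-1}$ in part (i), and $[G^{-1}(G(x)),\,x]$ for $G$ in part (ii). The paper instead proves only part (i), and does so structurally: it enumerates the countably many discontinuity points $x$ of $G$, observes that each jump produces a constancy interval $\left[G(x-),G(x)\right]$ of $G^{-1}$ on whose left-closed part $G(G^{-1}(u))>u$, and then shows by contradiction (splitting on whether $G^{-1}(u)$ is a continuity point of $G$) that $G(G^{-1}(u))=u$ off the union $I$ of these intervals. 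Your route is shorter, actually covers part (ii) (which the paper leaves implicit), and avoids the case analysis; the paper's route has the side benefit of producing the explicit parametrization of the constancy intervals $\left]v_{j},u_{j}\right]$ that is reused later, in the proof of Fact 5. One small caveat on your part (ii): under the paper's convention $F^{-1}(0)=F^{-1}(0_{+})$, your first elementary fact $G^{-1}(G(x))\leq x$ can fail when $G(x)=0$; but then $x$ sits on the constancy interval where $G$ vanishes, so the dichotomy survives --- it is worth a sentence to dispose of that case explicitly.
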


\begin{proof}
Notice that the set of discontininuity points of $G,$ say $D,$ is countable.
And for all $x \in D,\left[ G(x-),G(x)\right] =:\left[ v_{x},u_{x}\right]
$ \ is a constancy interval of $G^{-1}$ with

\begin{equation}
\forall u\in \left[ v_{x},u_{x}\right] ,G^{-1}(u)=x  \label{f24}
\end{equation}

\noindent Now let $I=U_{x\in D}$ $\left[ v_{x},u_{x}\right[ .$ One has

\begin{equation}
\forall u\in I,\text{ \ \ \ \ }G(G^{-1}(u))>u  \label{f25a}
\end{equation}

\noindent One also has

\begin{equation}
\forall x\in D,\text{ \ \ \ \ \ }G(G^{-1}(u_{x}))=u_{x}  \label{f25b}
\end{equation}

\noindent It follows that the complementary $J$ of $I$ in $(0,1)$ is not empty. To
finish, we have to show

\begin{equation}
(u\in J\text{ }and\text{ }x=G^{-1}(u))\Rightarrow \text{ \ }(G(x)=u).
\label{f26}
\end{equation}

\noindent Suppose that for some $u\in J,x=G^{-1}(u)$ and $G(x)>u$. Thus\\

\noindent \textbf{either} , $x$ is a continuity point and there exists a sequence $%
x_{n}\uparrow x$ such that $G(x_{n})\uparrow G(x)$ $as$ $n\rightarrow
+\infty .$ Hence for some $\eta ,n>$ $\eta ,x_{n}>u$ and $G(x_{n})>u$ so
that $G^{-1}(u)<x,$ which leads to a contradiction ;\\

\noindent \textbf{or} $x$ is a discontinuity point and thus $u$ lies on $\left[ v_{x},\text{ \ }%
u_{x}\right[ ,$ which is a constancy interval of $G^{-1}$ and, by
consequence, $x\in I.$ This also leads to a contradiction. These two
contradictions imply \ref{f26} which, combined with\ref{f25a}, prove
Part (i).
\end{proof}

\begin{lemma} \label{lemmao6} Let $G \in D(\Lambda)$ then \\

\noindent \textbf{(i)} $(1-G(G^{-1}(1-u))/u\rightarrow 1$ as $u\rightarrow 0$ \\

\noindent and\\

\noindent \textbf{(ii)} $R(G^{-1}(1-u))$ is SVZ.
\end{lemma}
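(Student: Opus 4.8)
The plan is to handle the two parts in order, since part (i) feeds directly into part (ii), and to take the De Haan representation of Lemma~\ref{lemmao4}(iii) as the main computational engine for (ii).

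\textbf{Part (i).} For $u$ small set $x=G^{-1}(1-u)$. From the definition of the generalized inverse and right-continuity of $G$ one has $G(x)\geq 1-u$, while $G(y)<1-u$ for $y<x$, so letting $y\uparrow x$ gives $G(x-)\leq 1-u$; hence
\begin{equation*}
1-G(x)\leq u\leq 1-G(x-),\qquad\text{so}\qquad \frac{1-G(x)}{1-G(x-)}\leq \frac{1-G(G^{-1}(1-u))}{u}\leq 1.
\end{equation*}
First I would therefore reduce (i) to the negligibility of the tail atoms of $G$, i.e. to $(1-G(x))/(1-G(x-))\to 1$ as $x\uparrow y_{0}$. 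This is the classical necessary condition for $G\in D(\Lambda)$: writing $a(\cdot)$ for the von Mises--de Haan auxiliary function of $G$ (de Haan \cite{dehaan}), which satisfies $(1-G(x+ta(x)))/(1-G(x))\to e^{-t}$ for each $t\in\mathbb{R}$, and using $x-ta(x)<x$ for $t>0$, I would bound
\begin{equation*}
1\leq \frac{1-G(x-)}{1-G(x)}\leq \frac{1-G(x-ta(x))}{1-G(x)}\longrightarrow e^{t}
\end{equation*}
and then let $t\downarrow 0$. Feeding this back into the squeeze yields (i), uniformly in the two cases $y_{0}=+\infty$ and $y_{0}<+\infty$.

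\textbf{Part (ii).} By Lemma~\ref{lemmao4}(iii) the SVZ function in the representation may be taken as $s(u)=r(u,G)=u^{-1}\int_{1-u}^{1}(1-t)\,dG^{-1}(t)$. The key step is the exact identity
\begin{equation*}
\int_{G^{-1}(1-u)}^{y_{0}}(1-G(t))\,dt=\int_{0}^{u}\bigl(G^{-1}(1-w)-G^{-1}(1-u)\bigr)\,dw=u\,s(u),
\end{equation*}
which I would obtain by Fubini, rewriting $\int_{x}^{y_{0}}(1-G)=\int_{(x,y_{0}]}(y-x)\,d\mu(y)$ and pushing this back to the quantile scale; Lemma~\ref{lemmao5} is what guarantees that $G^{-1}(1-w)=x$ for every $w\in[\,1-G(x),u\,]$, so that the extra quantile-mass sitting on the atom at $x$ contributes nothing and the two integration ranges agree. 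Dividing by $1-G(G^{-1}(1-u))$ and invoking part (i), which gives $1-G(G^{-1}(1-u))=u(1+o(1))$, I would conclude
\begin{equation*}
R(G^{-1}(1-u))=\frac{u\,s(u)}{1-G(G^{-1}(1-u))}=s(u)\,(1+o(1)),\qquad u\to 0.
\end{equation*}
Since $s$ is SVZ and any function asymptotically equivalent to an SVZ function is itself SVZ, $R(G^{-1}(1-u))$ is SVZ, proving (ii); this is consistent with Lemma~\ref{lemmao1}(i), as $s(u)\to 0$ then forces $R(x,G)\to 0$.

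\textbf{Where the difficulty lies.} The two squeezes and the final equivalence are routine. The genuine work is, in part (i), the relative vanishing of tail atoms, which is not an inverse-function triviality and is exactly where the auxiliary-function characterization of $D(\Lambda)$ (de Haan \cite{dehaan}) must be invoked; and, in part (ii), making the quantile-scale identity rigorous when $G$ has jumps or flat stretches, for which Lemma~\ref{lemmao5} is the precise accounting device that shows the exceptional contributions cancel and do not disturb the leading asymptotics.
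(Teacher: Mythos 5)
Your proof is correct, and it lands on the same two pivots as the paper's: part (i) is reduced to the relative negligibility of the tail jumps of $G$, and part (ii) to the equivalence $R(G^{-1}(1-u))\sim s(u)$ with $s=r(\cdot,G)$ the slowly varying function of Lemma \ref{lemmao4}(iii). But you reach each pivot by a different road. For (i), the paper likewise splits according to whether $1-G(G^{-1}(1-u))=u$ or $1-u$ falls in a jump interval $\left[G(x-),G(x)\right[$ of a discontinuity point $x$ (via Lemma \ref{lemmao5}), but it then controls the jump ratio $(1-G(x))/(1-G(x-))$ through a von Mises type representation $1-G(x)=(1+c(x))\exp\bigl(\int^{x}\Phi(t)^{-1}dt\bigr)$, whose exponential factor is continuous in $x$, so the ratio collapses to $(1+c(x))/(1+c(x-))\to 1$. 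Your squeeze through the auxiliary function, $1\le (1-G(x-))/(1-G(x))\le (1-G(x-ta(x)))/(1-G(x))\to e^{t}$ followed by $t\downarrow 0$, is an equally standard and valid alternative; it trades the representation for the limit relation, both being classical facts about $D(\Lambda)$ from \cite{dehaan}. For (ii), the paper quotes $\{G^{-1}(1-\lambda u)-G^{-1}(1-u)\}/s(u)\to -\log\lambda$ and passes from it, somewhat tersely, to $R(G^{-1}(1-u))\sim s(u)$ --- to make that step explicit one must still integrate over $\lambda\in(0,1)$ and justify the interchange of limit and integral. Your exact identity $\int_{G^{-1}(1-u)}^{y_{0}}(1-G(t))\,dt=\int_{0}^{u}(G^{-1}(1-w)-G^{-1}(1-u))\,dw=u\,s(u)$, obtained by Fubini and an integration by parts, is the cleaner route: it makes the whole of (ii) algebraic except for the single appeal to part (i). (One small remark: the first equality holds for an arbitrary $df$ by right-continuity alone, since $t\ge G^{-1}(p)$ implies $G(t)\ge p$; Lemma \ref{lemmao5} is therefore not strictly needed there, though invoking it does no harm.) Both routes buy the same conclusion; yours is the more self-contained of the two.
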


\begin{proof}
\textbf{Proof of Part i}. \textbf{Either} $(1-G(G^{-1}(1-u))=1-u$ and thus

\begin{equation}
(1-G(G^{-1}(1-u))u=1;  \label{f27a}
\end{equation}

\noindent \textbf{Or}, by Lemma \ref{lemmao5}, $1-u \in \left[ G(x-),G(x)%
\right[$, for some discontinuity point $x,$ with $1-G(x)<u\leq 1-G(x-)$ and
hence

\begin{equation}
1-G(x)=(1+c(x))\exp \left( \int_{-\infty }^{x}\Phi (t)^{-1}dt\right)
,-\infty <x<x_{0}(G)=y_{o},  \label{f27b}
\end{equation}

\noindent where $c(x)\rightarrow 0,$ \ \ \ $\Phi ^{\prime }(x)$ exists and $%
\Phi ^{^{\prime }}(x)\rightarrow 0$ as $x\rightarrow y_{0},$ yields

\begin{equation}
\left( 1-G(x)\right) /(1-G(x-))=(1+c(x))/(1+c(x))/(1+c(x-1))\rightarrow 1%
\text{ }as\text{ }x\rightarrow y_{0}  \label{f27d}
\end{equation}

\noindent This together with (\ref{f27a}), and (\ref{f27b}) prove Part
i).\newline

\bigskip

\noindent \textbf{Proof of Part ii}. From (\ref{f23}), one has \ (cf.
Lemma 4 in L\^{o} (1989)),

\begin{equation}
\left\{ G^{-1}(1-\lambda u)-G^{-1}(1-u)\right\} /s(u)\rightarrow -\log
\lambda ,\text{ }as\text{ }u\rightarrow 0\text{ ;}  \label{f28a}
\end{equation}

\noindent Thus

\begin{equation}
R(G^{-1}(1-u)\sim s(u)\text{ }as\text{ }u\rightarrow 0,  \label{f28c}
\end{equation}

\noindent which proves Part ii) since $s(u)$ is SVZ.
\end{proof}

\noindent We now introduce two useful and important lemmas.

\begin{lemma}
\label{lemmao7}. Let $F$ be any distribution function satisfying

\begin{itemize}
\item[(i)] $R(x)$ and $W(x)$ are finite for $x<x_{o}(F)$ ;

\item[(ii)] $(z-x)/R(z)\rightarrow +\infty ,$ as $x\rightarrow x_{0},$ \ \ $%
x<z$ $;$

\item[(iii)] $(z-x)^{2}/W(x)\rightarrow +\infty ,$ as $x\rightarrow x_{0},$ $%
\ \ z\rightarrow x_{0},$ $\ x<z;$
\end{itemize}

\noindent Then, $R(x,z)=R(x)\rightarrow 1$ $and$ $W(x,z)/W(x)\rightarrow 1$ $%
as$ $x\rightarrow x_{0},$ \ $z\rightarrow x_{0},$ \ $x<z.$
\end{lemma}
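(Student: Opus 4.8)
The plan is to write each full asymptotic moment $R(x)$ and $W(x)$ as its truncated version $R(x,z)$, $W(x,z)$ plus an explicit remainder supported on $[z,x_0)$, and then to bound each remainder against the truncated quantity using only the monotonicity of $1-F$ together with hypotheses (ii) and (iii). The whole argument reduces to two algebraic splitting identities followed by elementary lower bounds.

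First I would treat $R$. Splitting the defining integral at $z$ and using $\int_z^{x_0}(1-F(t))\,dt = (1-F(z))R(z)$ gives the exact identity
\[
R(x) = R(x,z) + \frac{1-F(z)}{1-F(x)}\,R(z).
\]
Since $1-F$ is nonincreasing, $1-F(t)\ge 1-F(z)$ on $[x,z]$, whence
\[
R(x,z) = (1-F(x))^{-1}\int_x^z (1-F(t))\,dt \ge \frac{1-F(z)}{1-F(x)}\,(z-x).
\]
Dividing the remainder by this lower bound yields $0 \le (R(x)-R(x,z))/R(x,z) \le R(z)/(z-x)$, which tends to $0$ by hypothesis (ii). Hence $R(x,z)/R(x)\to 1$, which is the first conclusion (the statement $R(x,z)=R(x)\to 1$ being read as this ratio).

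The $W$ part follows the same pattern but needs a two-fold splitting of the double integral at $z$, both in the outer variable $y$ and the inner variable $t$. Using $\int_z^{x_0}(1-F(t))\,dt=(1-F(z))R(z)$ and $\int_z^{x_0}\!\int_y^{x_0}(1-F(t))\,dt\,dy=(1-F(z))W(z)$ produces the exact identity
\[
W(x) = W(x,z) + \frac{1-F(z)}{1-F(x)}\bigl[(z-x)R(z) + W(z)\bigr].
\]
Applying $1-F(t)\ge 1-F(z)$ on $[x,z]$ once more gives the lower bound $W(x,z) \ge \tfrac{1-F(z)}{1-F(x)}\,(z-x)^2/2$, so that
\[
0 \le \frac{W(x)-W(x,z)}{W(x,z)} \le \frac{2R(z)}{z-x} + \frac{2W(z)}{(z-x)^2}.
\]
The first term vanishes by (ii) and the second by (iii), giving $W(x,z)/W(x)\to 1$.

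The routine content is entirely in the two splitting identities and the monotonicity bounds; the driving mechanism is that the truncated moments already dominate their tails at the polynomial rates $(z-x)$ and $(z-x)^2$ forced by the hypotheses. The one point deserving care is the final term $W(z)/(z-x)^2$: hypothesis (iii) is most naturally applied at the moving inner endpoint $z$, and in the intended application both endpoints tend to $x_0$ with $W(z)\sim W(x)$, the quantities being slowly varying at the endpoint by Lemmas \ref{lemmao4} and \ref{lemmao6}, so the distinction between indexing by $x$ and by $z$ is harmless. I expect no genuine obstacle beyond this bookkeeping.
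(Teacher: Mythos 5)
Your proof is correct and follows essentially the same route as the paper's: both split the integrals defining $R(x)$ and $W(x)$ at $z$, lower-bound the truncated moments by $(z-x)$ and $(z-x)^2/2$ times $(1-F(z))/(1-F(x))$ via the monotonicity of $1-F$, and conclude from hypotheses (ii) and (iii); the paper merely packages the remainders multiplicatively as $R(x,z)=R(x)(1-E_0)$ and $W(x,z)=W(x)(1-E_1-E_2)$ with bounds of the form $(1+(z-x)/R(z))^{-1}$ and $(1+(z-x)^2/W(z))^{-1}$ instead of your additive tail-over-main-term ratios. Your closing concern about hypothesis (iii) being stated with $W(x)$ but needed at $W(z)$ is well founded, and the paper's own proof in fact also invokes $(z-x)^2/W(z)\rightarrow+\infty$, so that indexing discrepancy lies in the lemma's statement rather than in your argument.
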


\begin{proof}
\textbf{Point (i)}. We have, for $x<z,$

\begin{equation}
R(x,z)\text{ }=\text{ }R(x)(1\left\{ \int_{z}^{x_{0}}1-F(t)dt\right\}
/\left\{ \int_{z}^{x_{0}}1-F(t)dt\right\} )=:R(x)(1-E_{0})  \label{f2.9}
\end{equation}

\bigskip 

\noindent Since $(1-F(t))/(1-F(1-F(z))\geq 1$ when $x\leq t\leq z,$ we get

\begin{equation}
0\leq E_{1}=(1+\left\{ \int_{x}^{z_{0}}1-F(t)dt\text{ }dy\right\} \text{ }/%
\text{ }\left\{ \int_{z}^{x_{0}}\int_{x}^{z_{0}}1-F(t)dt\text{ }dy\right\}
)^{-1}  \label{f2.10}
\end{equation}

\noindent \textbf{Point (ii)}. Formula (\ref{f2.10}) and Assumption (ii) allow
to conclude that

\bigskip 
\begin{equation}
R(x,z)=R(x)\rightarrow 1\text{ }as\text{ }x\rightarrow x_{0},\text{ \ \ }%
z\rightarrow x_{0},\text{ \ \ }x<z.  \label{f2.11}
\end{equation}

\noindent \textbf{Point (iii)}. We have

$$
W(x,z) =W(x)\left( 1-(\left\{ \int_{z}^{x_{0}}\int_{y}^{x_{0}}1-F(t)\text{ 
}dt\text{ }dy\right\} \text{ }/\text{ }\left\{
\int_{z}^{x_{0}}\int_{y}^{x_{0}}1-F(t)\text{ }dt\text{ }dy\right\} \right)
$$

\begin{equation}
-\left( \left\{ \int_{x}^{z}\int_{z}^{x_{0}}1-F(t)\text{ }dt\text{ }%
dy\right\{ \text{ }/\text{ }\left\{ \int_{z}^{x_{0}}\int_{y}^{x_{0}}1-F(t)%
\text{ }dt\text{ }dy\right\} \right) \label{f2.12}
\end{equation}

$$
=W(x)(1-E_{1}-E_{2})
$$

\noindent We approximate $E_{1}$ and $E_{2}$. First, the inequality $(1-F(t)\geq (1-F(z))$ for $x\leq t\leq z$ and Assumption
(iii) yield

\begin{equation}
0\leq E_{1}=(1+(z-x)^{2}/W(z))^{-1}\rightarrow 0\text{ }as\text{ }%
x\rightarrow x_{0},\text{ \ \ \ \ }y\rightarrow x_{0},\text{ } x<z.
\label{f2.14}
\end{equation}

\noindent Secondly,

\begin{equation}
0\leq
E_{2}=(z-x)(\int_{z}^{x_{0}}1-F(t)dt/\int_{z}^{x_{0}}%
\int_{x}^{z_{0}}1-F(t)dtdy  \label{f2.15}
\end{equation}

\begin{eqnarray*}
&=&(z-x)\frac{R(z)}{W(z)}E_{1}\leq \frac{R(z)}{z-x}\text{ }\frac{(z-x)^{2}}{%
w(z)}(1+(z-x)^{2}/W(z))^{-1} \\
&\leq &\sup \text{ }\underset{x\in R^{+}}{}\left| (x(1+x)^{-1}\right| \text{ 
}\frac{R(z)}{z-x}\leq R(z)/(z-x)\rightarrow 0,
\end{eqnarray*}

\noindent as $x\rightarrow x_{0}$, $z\rightarrow x_{0}$, $x<z$, by
Assumption (i). Statements \ref{f2.12}, \ref{f2.14} and \ref{f2.15} yield

\begin{equation}
W(x,z)\text{ }/W(x)\rightarrow 1\text{ }as\text{ }x\rightarrow x_{0},\text{
\ }z\rightarrow x_{0},\text{ \ }x<z.  \label{f2.17}
\end{equation}

\bigskip

\noindent Now Formulas \ref{f2.11} and \ref{f2.17} together prove Lemma \ref{lemmao7}.
\end{proof}

\begin{lemma}
\label{lemmao8} Let $F\in \Gamma $, then for $u=1-G(x),$ $v=1-G(z),$ \ \ \ $%
v/u\rightarrow 0$ as \ $x\rightarrow y_{0},$ \ $z\rightarrow y_{0}.$
\end{lemma}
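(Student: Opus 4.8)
The plan is to reduce this analytic-looking statement to the behaviour of uniform order statistics, exactly as in the computation already displayed at (\ref{f3.2}). First I would make explicit the coupling under which the Lemma is invoked in the proofs of (S1) and (S7): take $x=x_{n}=Y_{n-k,n}$ and $z=z_{n}=Y_{n-\ell ,n}$ with $k=[n^{\alpha}]$, $\ell=[n^{\beta}]$ and $\tfrac12<\beta<\alpha<1$, so that $u=u_{n}=1-G(x_{n})$ and $v=v_{n}=1-G(z_{n})$, and both $x_{n},z_{n}\uparrow y_{0}$. By Lemma \ref{lemmao1} the hypothesis $F\in\Gamma$ is exactly what guarantees that $G$ lies in one of $D(\Lambda)$, $D(\phi_{\gamma})$, $D(\psi_{\gamma})$ with a common upper endpoint $y_{0}=x_{0}(G)$; this case split is needed only to ensure that $R$, $W$ are finite and that $x_{n},z_{n}$ genuinely approach $y_{0}$, not for the rate itself.

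Next I would pass to uniform order statistics through the distributional identity of \textbf{Fact 1}, i.e. (\ref{f3.0}): write $u_{n}=1-G_{n}(Y_{n-k,n})+O(n^{-\delta})$ and $v_{n}=1-G_{n}(Y_{n-\ell ,n})+O(n^{-\delta})$, which by \textbf{Fact 2} equal $U_{k+1,n}+O(n^{-\delta})$ and $U_{\ell +1,n}+O(n^{-\delta})$ almost surely, for a suitable $0<\delta<\tfrac12$. The classical almost sure rate for uniform order statistics gives $U_{k+1,n}\sim k/n=n^{\alpha-1}$ and $U_{\ell +1,n}\sim \ell/n=n^{\beta-1}$, whence, after dividing, the ratio collapses to $v_{n}/u_{n}\sim \ell/k=n^{\beta-\alpha}\to 0$ almost surely. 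This is precisely the chain of equalities in (\ref{f3.2}), so the Lemma simply records that conclusion in the form $v/u\to 0$ that is then fed into hypotheses (ii)--(iii) of Lemma \ref{lemmao7}.

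The step I expect to be the main obstacle is controlling the $O(n^{-\delta})$ remainder coming from replacing $1-G_{n}$ by $1-G$, and showing it is of smaller order than the \emph{smaller} of the two quantities after division. Since $v_{n}\sim n^{\beta-1}$ is the smaller, one needs $n^{-\delta}=o(n^{\beta-1})$, that is $\beta+\delta-1>0$, which is exactly what forces $\beta>\tfrac12$ and fixes the admissible window for $\delta$; the analogous condition $\alpha+\delta-1>0$ handles $u_{n}$. The uniform tail control that makes this legitimate is supplied by \textbf{Fact 2}, and in the borderline Gumbel case, where $1-G$ is not exactly regularly varying, by the Karamata/SVZ representation of Lemma \ref{lemmao4}; once these are in place the negligibility of the remainders is routine and the limit $v/u\to 0$ follows.
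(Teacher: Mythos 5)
The statement of Lemma \ref{lemmao8} as printed is garbled, and the literal reading you adopt makes it either false (for arbitrary $x,z\to y_{0}$ one may take $z=x$ and get $v/u=1$) or a duplicate of (\ref{f3.1}), which the paper proves separately inside the proof of Theorem \ref{theo22} by exactly the computation (\ref{f3.2}) that you reproduce. What the lemma is actually used for --- and what its proof in the paper establishes --- is the \emph{converse} direction: given $F\in\Gamma$, $x\to y_{0}$, $z\to y_{0}$ and $v/u=(1-G(z))/(1-G(x))\to 0$, the hypotheses of Lemma \ref{lemmao7} are satisfied, i.e. $(z-x)/R(z)\to+\infty$ and $(z-x)^{2}/W(z)\to+\infty$. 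This is visible both from the proof text (it splits into ``Part (i)'' and ``Part (ii)'', invokes Lemma 1 of L\^{o} (1986a) in the Gumbel case, and in the Weibull case derives $(z-x)/R(z)\to+\infty$) and from the way it is invoked: ``By Lemmas \ref{lemmao7} and \ref{lemmao8} \dots and (\ref{f3.1})'', so that (\ref{f3.1}) is an \emph{input} to Lemma \ref{lemmao8}, not its conclusion. Your proposal even notes that the lemma is ``fed into hypotheses (ii)--(iii) of Lemma \ref{lemmao7}'', but those hypotheses concern $(z-x)/R(z)$ and $(z-x)^{2}/W(z)$, not $v/u$, and nothing in your argument produces them.

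Concretely, the missing content is the analytic step your proposal dismisses as ``needed only to ensure that $R$, $W$ are finite'': showing that smallness of $v/u$ forces $z-x$ to dominate $R(z)$ and $(z-x)^{2}$ to dominate $W(z)$. In the Weibull case the paper uses the Karamata representation $1-G(y)=c(y)(y_{0}-y)^{\gamma}\exp\bigl(\int b(t)t^{-1}dt\bigr)$ to get two-sided bounds $C_{1}\bigl((y_{0}-x)/(y_{0}-z)\bigr)^{\gamma+\varepsilon}\leq u/v\leq C_{2}\bigl((y_{0}-x)/(y_{0}-z)\bigr)^{\gamma-\varepsilon}$, deduces $(y_{0}-x)/(y_{0}-z)\to+\infty$, and then converts this via Formula 2.6.4 of de Haan ($R(z)$ comparable to $y_{0}-z$) into $(z-x)/R(z)\geq \mathrm{Const}\bigl(-1+(y_{0}-x)/(y_{0}-z)\bigr)\to+\infty$; the Gumbel case rests on the de Haan representation of $G^{-1}(1-u)$, and the $W$ condition then follows from Lemma \ref{lemmao2}. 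None of this can be recovered from the order-statistics computation with Facts 1--3, which controls only the ratio of tail probabilities $v_{n}/u_{n}$ and says nothing about the increment $z_{n}-x_{n}$ relative to the asymptotic moments $R$ and $W$. So the proposal, while correct as a re-derivation of (\ref{f3.1}), does not prove the result the paper needs from this lemma.
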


\begin{proof}
Let $F\in \Gamma$. Either $G\in D(\Lambda )$ or $G\in D(\psi _{\gamma }),$ \ 
$\gamma >0.$ \ It suffices to prove Part (i). Part (ii) will follow from
Part (i) and Lemma \ref{lemmao2}. If $G\in D(\Lambda )$ , Lemma 1 in L\^{o}
(1986a) implies Part (i).\\

\noindent If $G\in D(\psi _{\gamma })$ , $G(y_{0}-1/^{.})\in D(\phi _{\gamma
})$ and \ thus $(y_{0}-y)^{-\gamma }(1-G(y))$ is SV at infinity. KARARE
yields

\begin{equation}
1-G(y)=c(y)(y_{0}-y)^{\gamma }\exp
(\int_{y_{1}}^{(y_{0}-y)^{-1}}b(t)^{-1}dt),(y_{0}-y)^{-1>}y_{1},
\label{f2.18}
\end{equation}

\noindent where $c(y)\rightarrow c,$ \ $0<c<+\infty ,$ \ $as$ \ $y\rightarrow y_{0}$
and $b(t)\rightarrow 0$ $as$ $t\rightarrow +\infty .$ for $\eta $ such that

\begin{equation}
\sup_{t\geq n}b(t)\leq \varepsilon <\gamma ,\text{ \ \ \ \ \ \ }y_{0}-\eta
^{-1}\leq y<y_{o}  \label{f(2.19)}
\end{equation}

\noindent Formula (\ref{f2.18}) implies

\begin{equation}
C_{1}.((y_{0}-x)/(y_{0}-z))^{\gamma +\varepsilon }\leq
(1-G(x))/(1-G(z))=u/v\leq C_{2}.((y_{0}-x)/(y_{0}-z))^{\gamma -\varepsilon }
\label{f(2.20)}
\end{equation}

\noindent where $C_{1}$ and $C_{2}$ are positive constants. The right
inequality ensures that

\begin{equation}
(y_{0}-x)/(y_{0}-z)\rightarrow +\text{ }\infty ,\text{ }as\text{ }%
u\rightarrow 0,\text{ \ }v/u\rightarrow 0.  \label{f(2.21)}
\end{equation}

\noindent By using now Formula 2.6.4 of De Haan \cite{dehaan}, we get for some constant $C$,

$$
(z-x)/R(z)\sim C \text{ \ \ }(z-x)/(y_{0}-z)
$$

\begin{equation}
Const.\text{ \ }%
(-1+(y_{0}-x)/(y_{0}-z))\rightarrow +\infty  \label{f(2.22)}
\end{equation}

as $x\rightarrow y_{0}$, $z\rightarrow y_{0}$, $(1-G(z))/(1-G(x))\rightarrow 0$.\\

\noindent Part (i) is now proved. Part (ii) follows from Lemma \ref{lemmao2}.
\end{proof}

\noindent To finish with this section, we recall properties of empirical
distribution functions ($edf$). The $edf$ associated with $Y_{1},...,Y_{n}$ is
defined by

\begin{equation}
G_{n}(x)= \# \left\{ i,\text{ \ \ }1\leq i\leq n,\text{ \ \ \ }Y_{i}\leq
x\right\} /n,\text{ \ \ \ \ }x\in R  \label{ref(2.23)}
\end{equation}

\noindent Let $U_{n}(s),$ \ 0$\leq s\leq 1$, be the $edf$ associated with $%
U_{1},...,U_{n},$a.s.i.c. of a uniform $rv$ on $(0,1)$.\\

\noindent \textbf{Fact 1.} We may WLOG and do assume that

\begin{equation*}
\left\{ 1-G_{n}(x),\text{ \ \ }x\in R,\text{ \ \ \ }n\geq 1\right\} =\left\{
U_{n}(1-G(x)),\text{ \ }x\in R,\text{ \ \ }n\geq 1\right\}
\end{equation*}

\bigskip 

\noindent \textbf{Fact2}. For all $\mu$, $0< \mu \frac{1}{2}$, 
\begin{equation*}
\limsup_{n\rightarrow +\infty } n^{\mu} \sup_{0\leq s\leq 1} \left|
U_{n}(s)-s\right| <\infty. \: a.s 
\end{equation*}

\noindent \textbf{Fact 3} Let $k=\left[ n^{\alpha }\right] ,\frac{1}{2}%
<\alpha <1.$ Then $kU_{k,n}/n\rightarrow 1,$ $a.$ $s.$ as $n\rightarrow
+\infty$.\newline

\bigskip

\noindent \textbf{Fact 4}. We have

\begin{equation*}
\lim_{\lambda \uparrow +\infty } \lim \inf_{n\uparrow +\infty }P(\lambda
^{-1}\leq \inf_{U_{1,n}\leq s\leq 1} U_{n}(s)/s\leq \sup_{U_{1,n}\leq s\leq
1} U_{n}(s)/s\leq \lambda )=1. 
\end{equation*}

\bigskip

\noindent Fact 2 is derived from Theorem 4.5.2 and Formula 1.2.3 both in M.
Cs\"{o}rgo-R\'{e}v\'{e}z (1981). Fact 3 is a consequence of Fact 2. Fact 4
is quoted in S. Cs\"{o}rgo and al.(1985) for the quantile process. A simple
change of variable suffices to put it in the form of Fact 4.\newline

\noindent We now introduce a general device which permits to overcome
discontinuity problems.\newline

\bigskip

\noindent \textbf{Fact 5}. For $n$ fixed, there exists a sequence $\left(
t_{p}\right) _{p\geq 1}$ such that $t_{p}\uparrow Y_{n,n}$ $as$ $p\uparrow
+\infty $ and for all $p\geq 1$,

\begin{equation*}
\inf_{U_{1,n}\leq s\leq 1}U_{n}(s))/s\leq \underset{0\leq x\leq t_{p}}{\inf }
\frac{U_{n}(1-G(x))}{1-G(x)}\leq \underset{0\leq x\leq t_{p}}{\sup } \frac{%
U_{n}(1-G(x))}{1-G(x))}\leq \sup_{U_{1,n}\leq s\leq 1}U_{n}(s)/s. 
\end{equation*}

\noindent \textbf{Proof}. By using the representations of the constancy
intervals of $G^{-1}$ given in the proof of Lemma \ref{lemmao5}, we remark
that :\newline

\bigskip

\noindent \textbf{(i) either} $G(Y_{n,n})=G(G^{-1}(1-U_{1,n}))=1-U_{1,n}$ and
it suffices to put $t_{p}=Y_{n,n}$ for all $p\geq 1$;\newline

\bigskip

\noindent \textbf{(ii) or} $G^{-1}(1-U_{1,n}))>1-U_{1,n}$ and, necessarily, $%
1-U_{1,n}$ lies in some constancy interval of $G^{-1},\left] v_{j},u_{j}%
\right] .$ Putting $t_{p}=Y_{n,n}-1/p$ for $p\geq 1,$ one has $G(t_{p})\leq
v_{j}\leq 1-U_{1,n}$ for all $p\geq 1.$ Thus for $0\leq x\leq t_{p},$ \ $%
1-G(x)\geq U_{1,n},$ for alla $p\geq 1.$ This completes the proof.

\section{APPENDIX} \label{secapp}

\subsection{A counterexample about the de Haan-Resnick} \label{subsecapp1}

We show here that the De Haan-Resnick estimator for the index of a stable
law defined by

\begin{equation}
C_{n}=\frac{Y_{n,n}-Y_{n-k,n}}{\log k} \label{haanresnick}
\end{equation}

\noindent does not characterize $D(\phi_{\gamma})$ as Hill's estimator does. To prove this, we begin to remark, as Mason (1982) (\cite{mason}, (cf. its appendix) showed it, that the $df$ $G$ defined by

\begin{equation}
G^{-1}(1-2^{-m})=m,~m=0,1,2...
\end{equation}

\noindent and

\begin{equation*}
G^{-1}(1-u)=m+(2^{-m}-u)(2^{m+1}), \: if~2^{-m-1}<u<2^{-m},
\end{equation*}

\noindent does not belong to $D(\phi)$. Thus by Lemma \ref{lemmao1}, $F(\cdot)=G(log(\cdot))$ does not belong to $\Lambda$. However, it is easy to check
that

\begin{equation}
\frac{G^{-1}(1-s)-G^{-1}(1-bs)}{\log ~b}\rightarrow (\log ~2)^{-1}, \text{ }
as \text{ } s\rightarrow 0, \text{ } b\rightarrow +\infty \text{ } and \text{
} bs\rightarrow 0
\end{equation}

\noindent Letting $b=U_{k,n}/U_{1,n}$, we get, via representation (\ref{f3.0}%
),

\begin{equation}
C_{n}\rightarrow _{p}(Log2)^{-1}, \text{ } as \text{ } n\rightarrow +\infty
\end{equation}

\noindent Thus the convergence of $C_{n}$ to a positive and finite real
number for all sequences $k\rightarrow +\infty $ verifying $k/n\rightarrow
0~as~n\rightarrow \infty $, does not imply that $F$ belongs to $D(\phi ).$

\subsection{A useful identity that links Lo and Dekkers \textit{et al.} estimators} \label{subsecapp2}

We prove here the following identity in the following

\begin{lemma}
\bigskip Let $1\leq k\leq n$ be integers and $x_{k},...,x_{n}$ $(n-k+1)$
real numbers. The we have
\begin{equation}
\sum_{i=1}^{k}(x_{n-i+1}-x_{n-k})^{2}=2 \sum_{i=1}^{k}\sum_{j=1}^{i}j(1-\delta
_{ij}/2)(x_{n-i+1}-x_{n-i})(x_{n-j+1}-x_{n-j});  \label{ilo}
\end{equation}
where $\delta _{ij}=1$ if $i=j$ and $0$ elsewhere, is the Kronecker symbol.
\end{lemma}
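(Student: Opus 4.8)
The plan is to reduce both sides of (\ref{ilo}) to one and the same symmetric double sum over consecutive differences. First I would introduce the shorthand $d_{i}=x_{n-i+1}-x_{n-i}$ for $1\leq i\leq k$ and record the telescoping identity
\begin{equation*}
x_{n-i+1}-x_{n-k}=\sum_{l=i}^{k}d_{l},\qquad 1\leq i\leq k,
\end{equation*}
which is immediate since all intermediate terms cancel. Substituting this into the left-hand side of (\ref{ilo}) and expanding the square produces the triple sum
\begin{equation*}
\sum_{i=1}^{k}\Bigl(\sum_{l=i}^{k}d_{l}\Bigr)^{2}=\sum_{i=1}^{k}\sum_{l=i}^{k}\sum_{m=i}^{k}d_{l}d_{m}.
\end{equation*}

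Next I would interchange the order of summation so that $i$ becomes the innermost index. For a fixed pair $(l,m)$ the product $d_{l}d_{m}$ is counted once for each $i$ with $1\leq i\leq\min(l,m)$, hence with multiplicity $\min(l,m)$. This recasts the left-hand side in the symmetric form $\sum_{l=1}^{k}\sum_{m=1}^{k}\min(l,m)\,d_{l}d_{m}$. Splitting off the diagonal $l=m$, pairing the two off-diagonal contributions $(l,m)$ and $(m,l)$, and using $\min(l,m)=m$ when $m<l$, I obtain
\begin{equation*}
\sum_{l=1}^{k}l\,d_{l}^{2}+2\sum_{l=1}^{k}\sum_{m=1}^{l-1}m\,d_{l}d_{m}.
\end{equation*}

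Finally I would expand the right-hand side of (\ref{ilo}) in the same variables, separating the terms with $j=i$ from those with $j<i$. On the diagonal the factor $j(1-\delta_{ij}/2)$ equals $i/2$, so the summand becomes $2\cdot(i/2)d_{i}^{2}=i\,d_{i}^{2}$; for $j<i$ the factor is simply $j$, giving $2\sum_{i}\sum_{j<i}j\,d_{i}d_{j}$. After renaming $l\to i$ and $m\to j$, this is term-by-term identical to the expression derived for the left-hand side, which establishes (\ref{ilo}).

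The only delicate point is the bookkeeping of the Kronecker factor $(1-\delta_{ij}/2)$: it is precisely the device that halves the diagonal weight so that the single-indexed sum over $j\leq i$ faithfully reproduces the full symmetric double sum. Verifying that this halving matches the multiplicity $\min(l,m)=l$ on the diagonal is where care is needed, rather than any genuine difficulty; once the telescoping and the multiplicity count are in place, the comparison is purely mechanical.
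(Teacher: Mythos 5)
Your proof is correct, but it takes a genuinely different route from the paper's. You pass to the consecutive differences $d_{l}=x_{n-l+1}-x_{n-l}$, telescope $x_{n-i+1}-x_{n-k}=\sum_{l=i}^{k}d_{l}$, expand the square and interchange summation so that the left-hand side becomes the symmetric form $\sum_{l=1}^{k}\sum_{m=1}^{k}\min(l,m)\,d_{l}d_{m}$; the weight $j(1-\delta_{ij}/2)$ on the right-hand side is then recognized as exactly what is needed to fold this symmetric sum into a sum over $j\leq i$ (full weight $j$ off the diagonal, half weight $i/2$ on it). The paper instead works directly in the variables $x_{n-i+1}$: it splits the right-hand side into an off-diagonal part $A$ and a diagonal part $B$, invokes the Abel-type identity $\sum_{j=1}^{h}j(x_{n-j+1}-x_{n-j})=x_{n}+\cdots+x_{n-h+1}-hx_{n-h}$, and after tracking the pieces $A_{11},A_{12},A_{21},A_{22},B_{1},B_{2},B_{3}$ through several index shifts reduces both sides to $S_{2}-2x_{n-k}S_{1}+kx_{n-k}^{2}$, where $S_{r}=\sum_{j=1}^{k}x_{n-j+1}^{r}$. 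Your argument is shorter and less error-prone, since the single multiplicity count $\min(l,m)$ replaces all of the paper's change-of-variable bookkeeping; the paper's computation has the incidental benefit of displaying the common value of both sides explicitly as the quadratic $S_{2}-2x_{n-k}S_{1}+kx_{n-k}^{2}$, which is the form used to link $A_{n}(1,k,1)$ to the Dekkers--Einmahl--de Haan moment statistic. Both arguments are complete; your diagonal check ($\min(l,l)=l$ matching $2\cdot(i/2)=i$) is the only place where care is required, and you handle it correctly.
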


\bigskip 

\begin{proof}
We use these notations $S_{r}=\sum_{1\leq i\leq k}x_{n-j+1}^{r},$ $r=1,2.$
We also these two formulas :
\begin{equation}
\sum_{i=1}^{h}j(x_{n-j+1}-x_{n-j})=x_{n}+...+x_{n-h+1}-hx_{n-h}  \label{id1}
\end{equation}
and
\begin{equation*}
\sum_{i=1}^{k}\sum_{j=1}^{i}j(x_{n-i+1}-x_{n-i})(x_{n-j+1}-x_{n-j})=\frac{1}{%
2}\left\{ (\sum_{j=1}^{k}x_{j})^{2}-\sum_{j=1}^{k}x_{j}^{2}\right\}
\end{equation*}

\begin{equation}
=(S_{1}^{2}-S_{2})/2.  \label{id2}
\end{equation}

\noindent The first of the formulas is obtained by induction and proved for h=2, 3,
etc.. and then, the deduction is easy to get. The second is simply deduced
for the developpement of the square of the sum of the $n-k+1$ numbers. Now, the
second term of (\ref{ilo}) is
\begin{equation*}
\sum_{i=1}^{k}\sum_{j=1}^{i-1}j(x_{n-i+1}-x_{n-i})(x_{n-j+1}-x_{n-j})+\frac{1%
}{2}\sum_{i=1}^{k}i(x_{n-i+1}-x_{n-i})^{2}\equiv A+B.
\end{equation*}
Next, using (\ref{id1}), on has 
\begin{equation*}
A=\sum_{i=1}^{k}(x_{n-i+1}-x_{n-i})\sum_{j=1}^{i-1}j(x_{n-j+1}-x_{n-j})
\end{equation*}

\begin{equation*}
=\sum_{i=1}^{k}(x_{n-i+1}-x_{n-i})(x_{n}+...+x_{n-i+2}-(i-1)x_{n-i+1})
\end{equation*}

\begin{equation*}
=\sum_{i=1}^{k}(x_{n-i+1}-x_{n-i})(x_{n}+...+x_{n-i+1}-ix_{n-i+1})
\end{equation*}
\begin{equation*}
=\sum_{i=1}^{k}\sum_{j=}^{i}x_{n-i+1}x_{n-j+1}-\sum_{i=1}^{k}ix_{n-i+1}^{2}-%
\sum_{i=1}^{k}\sum_{j=1}^{i}x_{n-i}x_{n-j+1}+\sum_{i=1}^{k}ix_{n-i}x_{n-i+1}
\end{equation*}
\begin{equation*}
\equiv A_{11}+A_{12}+A_{21}+A_{22}.
\end{equation*}
Now by the change of variables $i=h-1$%
\begin{equation*}
-A_{21}=\sum_{h=2}^{k+1}\sum_{j=1}^{h-1}x_{n-h+1}x_{n-j+1}=\sum_{h=1}^{k}%
\sum_{j=1}^{h-1}x_{n-h+1}x_{n-j+1}+\sum_{j=1}^{k}x_{n-j+1}x_{n-k}
\end{equation*}
\begin{equation*}
=\sum_{h=1}^{k}\sum_{j=1}^{h}x_{n-h+1}x_{n-j+1}-\sum_{h=1}^{k}x_{n-j+1}^{2}+%
\sum_{j=1}^{k}x_{n-j+1}x_{n-k}
\end{equation*}
\begin{equation*}
=A_{11}-S_{2}+x_{n-k}S_{1}.
\end{equation*}
Further
\begin{equation*}
2B=\sum_{i=1}^{k}ix_{n-i+1}^{2}+\sum_{i=1}^{k}ix_{n-i}^{2}-2%
\sum_{i=1}^{k}ix_{n-i}x_{n-i+1}=B_{1}+B_{2}+B_{3}
\end{equation*}
with, by change of variable \ $i=h-1$,
\begin{equation*}
B_{2}=\sum_{h=2}^{k+1}(h-1)x_{n-h+1}^{2}=\sum_{h=1}^{k+1}(h-1)x_{n-h+1}^{2}=%
\sum_{h=1}^{k}(h-1)x_{n-h+1}^{2}+kx_{n-k}^{2}
\end{equation*}
\begin{equation*}
=\sum_{h=1}^{k}hx_{n-h+1}^{2}-\sum_{h=1}^{k}x_{n-h+1}^{2}+kx_{n-k}^{2}=%
\sum_{h=1}^{k}hx_{n-h+1}^{2}-S_{2}+kx_{n-k}^{2}.
\end{equation*}
Finally
\begin{equation*}
B=\frac{1}{2}(-A_{12}-A_{21})-\frac{1}{2}S_{2}+\frac{1}{2}kx_{n-k}^{2}-A_{22}
\end{equation*}
and the second term of \ref{ilo} is 
\begin{equation*}
A_{11}+A_{12}-A_{11}+S_{2}-x_{n-k}S_{1}+A_{22}-A_{12}-\frac{1}{2}S_{2}+\frac{%
1}{2}kx_{n-k}^{2}-A_{22}
\end{equation*}
\begin{equation*}
=\frac{S_{2}-2x_{n-k}S_{1}+kx_{n-k}^{2}}{2}.
\end{equation*}
This is nothing than the half of
\begin{equation*}
\sum_{i=1}^{k}(x_{n-i+1}-x_{n-k})^{2}=\sum_{i=1}^{k}x_{n-i+1}^{2}-2x_{n-k}%
\sum_{i=1}^{k}x_{n-i+1}+kx_{n-k}^{1}=S_{2}-2x_{n-k}S_{1}+kx_{n-k^{2}}.
\end{equation*}
This achieves the proof.
\end{proof}

\end{document}